%% file: main.tex
\documentclass[sigconf]{acmart}

\renewcommand\footnotetextcopyrightpermission[1]{}
\usepackage{amsmath,amssymb,amsthm,mathtools}

\acmConference{}{}{}
\acmBooktitle{}
\acmPrice{}
\acmDOI{}
\acmISBN{}

\usepackage{graphicx}
\usepackage{subfigure}
\usepackage{textcomp}

\usepackage{booktabs}
\usepackage{multirow}
\usepackage{makecell}
\usepackage{threeparttable}

\usepackage{enumitem}

\usepackage{pgfplots}
\pgfplotsset{compat=1.18}

\usepackage{algorithm}
\usepackage{algpseudocode}

\usepackage[nameinlink,noabbrev]{cleveref}

\setlist[itemize]{leftmargin=*,topsep=2pt,itemsep=1pt}
\setlist[enumerate]{leftmargin=*}

\newtheorem{problem}{Problem}
\newtheorem{theorem}{Theorem}
\newtheorem{lemma}[theorem]{Lemma}

\newtheorem{corollary}[theorem]{Corollary}

\theoremstyle{definition}
\newtheorem{definition}[theorem]{Definition}

\begin{document}

\title{Finding the Balance Rate of Uncertain Signed Graphs}

\author{Zeyu Wang}
\affiliation{
  \institution{Zhejiang University}
  \country{}
}

\author{Kudria Sergei}
\affiliation{
  \institution{The Chinese University of Hong Kong, Shenzhen}
  \country{}
}

\author{Jingbang Chen}

\affiliation{
  \institution{The Chinese University of Hong Kong, Shenzhen}
  \country{}
}

\author{Jiawei Chen}
\affiliation{
  \institution{Zhejiang University}
  \country{}
}

\author{Xinyu Wang}
\affiliation{
  \institution{Zhejiang University}
  \country{}
}

\author{Xiaodong Luo}
\affiliation{
  \institution{Shenzhen Research Institute of Big Data}
  \country{}
}

\author{Can Wang}
\affiliation{
  \institution{Zhejiang University}
  \country{}
}

\renewcommand{\shortauthors}{Wang et al.}

\begin{abstract}
Signed graphs are widely used to analyze complex systems such as social, political, and biological networks. The notion of balance, a key concept of signed graphs, reflects the stability of relationships. While it has been extensively studied in deterministic graphs, real-world networks often exhibit uncertainty in their connections, which traditional approaches struggle to address. To bridge this gap, we introduce the concept of balance rate, a metric for quantifying the degree of balance in uncertain signed graphs, and prove that computing it exactly is NP-hard, motivating the need for efficient estimation methods. We propose a novel Rao–Blackwellized spanning-tree estimator that achieves near-linear time complexity per sample by leveraging graph decomposition and structural properties. We also construct asymptotically justified confidence intervals using the Delta method. Experiments on real-world datasets demonstrate the efficiency and effectiveness of our approach, enabling scalable balance analysis in uncertain signed graphs.
\end{abstract}

\maketitle

\input{intro}
\input{relatedwork}
\input{overview}
\input{algorithm}
\input{bounds}
\input{experiments}
\input{conclusion}

\clearpage
\bibliographystyle{ACM-Reference-Format}
\bibliography{ref}

\clearpage
\appendix
\input{appendix}
\end{document}

%% file: intro.tex
\section{INTRODUCTION}

Signed graphs, where edges represent positive or negative relationships, provide a powerful framework for analyzing interactions in complex systems \cite{harary1953notion}. As a fundamental concept in signed graphs, balance theory reflects the stability of relationships of the network \cite{cartwright1956,harary1959measurement}. Generally speaking, a graph is balanced if it can be divided into two groups where all edges within groups are positive while all edges between groups are negative. Such a notion indicates the structural properties of real-world networks. For instance, studies on social media platforms, such as Reddit or Twitter, have shown that polarized groups often emerge, where users within the same group exhibit positive interactions (e.g., likes), with negative interactions (e.g., conflicts) between groups \cite{doreian1996partitioning,yang2007community,giscard2017evaluating,facchetti2011computing,xiong2020logic}. In political networks, balance theory has been applied to analyze party alliances and conflicts, revealing the relationships between political factions \cite{adamic2005political,aref2020detecting,huang2022pole,capozzi2023analyzing}. In bioinformatics, the stimulatory and inhibitory interactions of genes represent positive and negative edges respectively, and a balanced network identifies functionally inhibitory gene groups \cite{peter2019encyclopedia,yang2020graph,liu2018multi,zhang2012signed}. It is also commonly applied in ecological \cite{saiz2018,hashemi2022review,xiao2017mapping} and financial networks \cite{zhang2017,ehsani2020structure,boginski2005statistical}.

However, the concept of balance is strict: the entire system's balance can be disrupted by few edges. Therefore, in reality, a perfectly balanced graph does not usually appear, preventing us from capturing the actual community. To address this, one possible solution is to study the optimal balance in unbalanced graphs, such as extracting a maximum balanced subgraph \cite{figueiredo2014maximum,ordozgoiti2020finding,chen2024scalable} or computing frustration index \cite{harary1959measurement,aref2020modeling}. Another way is to account for edge uncertainty. In practice, relationships may fade, or re-emerge due to changes in external conditions, shifts in shared interests, or the resolution of conflicts over time. Such uncertainty of relationships reflects the inherent dynamics of real-world systems. We consider this dynamic nature as edge existence probability, and model such systems as uncertain signed graphs (USGs). Then a natural question arises: \emph{what is the probability that the network is balanced?} We term this extension of balance as the \emph{balance rate}. It enhances the applicability of signed graph theory in uncertain environments. Specifically, it reveals diffusion efficiency in social platforms \cite{zhang2017}, polarization evaluation in political networks \cite{doreian2019}, and gene anomalies in bioinformatics \cite{costanzo2016, liu2015}.

This paper investigates the balance rate in Uncertain Signed Graphs (USGs), a metric that quantifies structural stability under edge uncertainty. We formally define the balance rate, establish its fundamental properties, and prove that its exact computation is NP-hard. To address this computational challenge, we propose an efficient approximation framework centered on a structural decomposition that partitions the USG into biconnected components via the identification of bridges and articulation points. This preprocessing step reduces the global problem into edge-independent sub-problems, significantly mitigating sampling variance. We further enhance estimation precision through a Rao-Blackwellized Monte Carlo approach; by leveraging the algebraic properties of the graph's spanning trees, we achieve substantial variance reduction over standard Monte Carlo methods, ensuring both high precision and computational tractability. The framework is supported by rigorous theoretical confidence bounds. Finally, comprehensive experiments on synthetic and real-world datasets demonstrate the scalability of our algorithm, its ability to capture intuitive structural trends, and its practical utility in identifying critical edges.

The contributions of this paper are summarized as follows:
\begin{itemize}[leftmargin=0pt]
    \item We propose the concept of balance rate as a novel yet important metric for balance evaluation in uncertain signed graphs, and discuss its NP-hardness mathematically.
    \item We develop an efficient and unbiased balance rate estimator, based on Rao-Blackwellized spanning-tree, that leverages graph decomposition and structural properties to reduce variance, achieving nearly linear time complexity for each sample.
    \item We construct non-parametric confidence intervals for global balance rate, using the Delta method to account for the specific variance and skewness of the Rao-Blackwellized samples, ensuring robust and adaptive inference.
    \item Evaluating our proposed algorithm on synthetic and real-world datasets, we first establish its computational scalability and robustness across diverse network sizes and densities. We then demonstrate that the metric faithfully captures the intuitive trend of balance through edge-probability modulation and injection of critical edges, ultimately showcasing its practical utility for the efficient identification of critical edges that drive network imbalance.\footnote{Codes are available at https://github.com/SiggyLaewsky/Finding-the-Balance-Rate-of-Uncertain-Signed-Graphs}
\end{itemize}

\noindent\textit{Outline.} The remainder of this paper is structured as follows.
Section~\ref{sec:related_work} reviews existing methodologies for signed graph analysis. Section~\ref{sec:definitions} formalizes the balance rate for Uncertain Signed Graphs (USG) and establishes its fundamental properties, including a proof of its NP-completeness. We detail our algorithmic framework for balance rate evaluation in Section~\ref{sec:algo} and derive its theoretical PAC bounds. Section~\ref{sec:experiments} provides a comprehensive empirical evaluation using synthetic and real-world datasets, focusing on computational scalability, the intuitive trend of balance, and the identification of critical edges. Finally, we offer concluding remarks and summarize our findings in Section~\ref{sec:conclusion}.

%% file: relatedwork.tex
\section{RELATED WORK}\label{sec:related_work}
\noindent\textit{Uncertain Graphs.}
Researches on uncertain graphs are thriving in a wide range of fields. A classic problem is the reliability query \cite{harary1967graph,aggarwal1975reliability}. It computes the probability that two nodes are connected. Generalizations like conditional reliability \cite{khan_conditional_2018}, expected reliable distance \cite{potamias_fast_2009} and distance-constrained reliability \cite{jin_distance-constraint_2011} also serve as basic metrics reflecting the connectivity of nodes, showing potential in fields like social and bioinformatics networks \cite{li_discovering_2017,jin2010computing}.
Most uncertain graph query problems are proven to be \#P-Hard \cite{ball_computational_1986,valiant1979complexity}. Therefore, they are usually solved by Naive Monte-Carlo (NMC). It is easy to implement, and the result's accuracy can be guaranteed by the sampling size $K$ \cite{fishman_comparison_1986}. However, the scalability is limited as graph scale increases. To address this issue, some researchers design heuristic sampling skills to improve efficiency, such as recursive sampling \cite{jin_distance-constraint_2011}, lazy propagation sampling \cite{li_discovering_2017} and recursive stratified sampling \cite{li_recursive_2016}. Other researchers apply indexing structures to answer queries by labeling index directly, like fixed width decomposition \cite{maniu_indexing_2017} and UR-index \cite{wang2024fast}.

\vspace{0.25cm}
\noindent\textit{Graph Community Mining.}
Community detection is another attractive task in graph data mining \cite{fortunato2010community}. It can be regarded as graph clustering problem, including $k$-core and $k$-truss decomposition \cite{bonchi2014core,huang2016truss}, induced subgraph mining \cite{zou2013polynomial}, subgraph similarity search \cite{yuan2012efficient}, dense subgraph discovery\cite{zou2013polynomial} and frequent subgraph patterns discovery \cite{zou2013polynomial,papapetrou2011efficient}. Such problems help find stable and robust structures in graphs. There are also variant clustering problems on special graphs, such as bipartite graphs \cite{zha2001bipartite}, uncertain graphs \cite{yang2019index,wen2020computing,zou2017truss} and so on.

\vspace{0.25cm}
\noindent\textit{Signed Graphs.}
Signed graphs were first introduced by Harary et al. \cite{harary1953notion}. They also proposed the concept of balanced graph, and explored methods to determine balance. Then they generalized Heider’s theory of balance onto signed graphs \cite{cartwright1956}, developed an algorithm to detect balance in signed graphs \cite{harary1980simple}. Since graphs are usually not balanced, some researches aim to find ways to make the graph balanced. Frustration index was introduced to count the minimal number of edges to be removed to make the whole graph balanced \cite{harary1959measurement}, and Aref et al. provided exact computation for it \cite{aref2020modeling}. These researches aim to break imbalanced structures at least cost, while others focus on finding maximum balanced subgraphs \cite{figueiredo2014maximum,ordozgoiti2020finding}. And Chen et al. proposed a scalable algorithm for finding maximum balanced graph with tolerance to meet real scenario standards \cite{chen2024scalable}. 
With edges assigned to existence probabilities in real-world scenarios, interests in uncertain signed graphs are increasing. Mandaglio et al. first proposed the uncertain signed graph clustering problem, and designed efficient partition and heuristics \cite{mandaglio2020and}. Then Zhang et al. introduce the concept of antagonistic communities to discover opponent groups, and leverage a local search method to identify them \cite{zhang2024finding}. But the most fundamental property, balance of an uncertain signed graph, remains underexplored, and requires a comprehensive theoretical framework.

%% file: overview.tex
\section{PROBLEM OVERVIEW}\label{sec:definitions}

\subsection{Preliminaries and Definitions}

Our research focuses on the balance theory on uncertain signed graphs. An \emph{uncertain signed graph} $G=(V,E,P,S)$ is composed of a vertex (node) set $V$ and an edge set $E$. Each edge $e\in E$ exists independently with probability $p_e\in [0,1]$ and a fixed sign $s_e\in S=\{+,-\}$. The edge signs are deterministic, where $E^+$ denotes positive edges with $s_e=``+"$ and $E^-$ denotes negative edges, while uncertainty is restricted to edge existence. If we decide the existence of edge by $p_e$, we get a certain subgraph $g=(V_g,E_g,S_g)$ of $G$. Such a $g$ is called a realization, or possible world, and the probability of $g$ is given by
\[
\Pr(g)=\prod_{e\in E_g} p_e \prod_{e\notin E_g} (1-p_e).
\]

The concept of balance is derived from certain signed graphs. We first give the formal definition of balanced graphs as follows, which is the same as previous works \cite{harary1953notion,cartwright1956}. Note that we and these works require the graph to be connected for the community detection proposal.

\begin{definition}[Balanced Graph]
Given a signed graph $G=(V,E,S)$, $G$ is balanced if $G$ is connected and there exists a partition $V=V_1\cup V_2$, $V_1\cap V_2=\emptyset$ such that for each edge $(u,v)\in E^+$, $u,v$ belong to the same partition while for each edge $(u,v)\in E^-$, they belong to different partitions.
\end{definition}

Graphs are usually not balanced considering the strict conditions. Fortunately, in uncertain signed graphs, there always exists some realizations that are balanced. In this paper, we introduce balance rate as the probability of an uncertain graph being balanced. Here, we give the definition of balance rate formally:

\begin{definition}[Balance Rate ($R_{bal}$)]
Let $\mathcal{G}$ denote the set of all realizations of a USG $G$. The \emph{balance rate} of $G$ is defined as
\[
R_{bal}(G)=\sum_{g\in\mathcal{G}}\Pr(g)\cdot\mathbb{I}(g\text{ is balanced}),
\]
where $\mathbb{I}$ is a binary indicator function. We sometimes abuse the definition of $R_{bal}$ to denote its expectation as well, which is clear from the context.
\end{definition}

Balance rate quantifies the stability of the network, providing a fundamental measurement for USG analysis. To further investigate this property, we formally define the computational problem as follows:

\begin{problem}[$R_{bal}$ computation]
Given an uncertain graph $G=(V,E,P,S)$, find the balance rate $R_{bal}(G)$ of $G$.
\end{problem}

\subsection{Basic Properties of Balance Rate}

\setlength{\abovedisplayskip}{0.5pt}
\setlength{\belowdisplayskip}{0.5pt}
The definition of balanced graphs reveals the structural property of signed graphs. Several methods have been proposed to identify a balanced subgraph, among them, a practical one is based on negative cycle detection \cite{harary1959measurement}. Here, a negative cycle is the cycle whose number of negative edges is odd. 

\begin{theorem}\label{thm:balance_detection}
A signed graph $G$ is balanced, if and only if there is no negative cycle in $G$.
\end{theorem}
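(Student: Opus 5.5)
The plan is to prove both directions through a parity labelling of the vertices. Since the Definition of a balanced graph includes connectivity, I read the statement for connected signed graphs $G$, where ``balanced'' reduces to the existence of the partition $V=V_1\cup V_2$. I will say this explicitly at the start: a disconnected graph with no negative cycle is not balanced under this paper's definition. The argument applies verbatim to the partition condition in general.

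\emph{($\Rightarrow$)} Suppose such a partition exists, and let $C=v_0v_1\cdots v_k=v_0$ be any cycle. Walk along $C$ and track which side of the partition the current vertex lies on. A positive edge keeps us on the same side, and a negative edge switches sides. Since the walk returns to $v_0$, the total number of switches must be even. Hence $C$ has an even number of negative edges, so $G$ has no negative cycle. A negative self-loop, if allowed, would already violate the partition condition, which is consistent with this.

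\emph{($\Leftarrow$)} Assume $G$ is connected and has no negative cycle. Fix a spanning tree $T$ of $G$ with root $r$. For each vertex $v$, let $\lambda(v)\in\{0,1\}$ be the parity of the number of negative edges on the unique $r$--$v$ path in $T$. Set $V_1=\{v:\lambda(v)=0\}$ and $V_2=\{v:\lambda(v)=1\}$.
\begin{itemize}
\item \emph{Tree edges.} For a tree edge $(u,v)$, where $v$ is the child of $u$, we have $\lambda(v)=\lambda(u)$ exactly when $s_{(u,v)}=+$. So tree edges satisfy the partition condition by construction.
\item \emph{Non-tree edges.} For a non-tree edge $(u,v)$, consider the fundamental cycle $C_{uv}$, formed by the $T$-path from $u$ to $v$ together with the edge $(u,v)$. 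The parity of negative edges on the $T$-path from $u$ to $v$ equals $\lambda(u)\oplus\lambda(v)$. This holds because the two root paths share their segment above the lowest common ancestor, and that shared segment contributes twice, hence cancels mod 2.
\item \emph{Conclusion.} Since $C_{uv}$ is not negative, $s_{(u,v)}$ is negative iff $\lambda(u)\oplus\lambda(v)=1$. So $(u,v)$ also respects the partition, and $G$ is balanced.
\end{itemize}

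The only delicate step is the parity identity for the tree path via the lowest common ancestor, together with checking that $C_{uv}$ is a genuine cycle. It is a cycle because $(u,v)\notin T$ and the tree path between $u$ and $v$ is simple. Everything else is bookkeeping. I would also note that no negative cycle is needed only for fundamental cycles, which gives the classical linear-time test via BFS labelling.
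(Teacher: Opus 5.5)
Your proof is correct and complete. The paper gives no proof of this theorem; it quotes it as Harary's classical characterization. So there is no competing argument to compare against. Your spanning-tree parity argument is the standard one.

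It is also exactly the invariant behind \Cref{alg:rb_sample}. Your label $\lambda(v)$ is the DSU parity $\rho_v$ relative to the component root. Tree edges are merged so that $\rho_u\oplus\rho_v=s$ holds, and the chord test $\rho_u\oplus\rho_v\neq s_e$ is your fundamental-cycle check. Your closing remark, that only fundamental cycles need checking, is what makes that algorithm correct.

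Your connectivity caveat is warranted rather than pedantic. Under the paper's definition, which requires $G$ to be connected, the ``if'' direction fails for disconnected graphs. Elsewhere the paper silently uses the connectivity-free notion:
\begin{itemize}
\item the proof of Lemma~\ref{lem:bridge_decomp} asserts that the absence of a bridge does not affect balance;
\item the proof of Lemma~\ref{p_eta_monotonicity} and \Cref{alg:rb_sample} both treat balance as ``no negative cycle''.
\end{itemize}
For that general partition condition, your claim that the argument applies ``verbatim'' should read ``componentwise''. Take a spanning forest with one root per component, exactly as the DSU does.
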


To exploit the structural sparsity of the network, we decompose the USG into biconnected components. We define a bridge as an edge whose removal increases the number of connected components, and an articulation point as a vertex whose removal similarly disconnects the graph. The following two lemmas establish that the balance rate is multiplicative across these partitions, allowing us to factorize the global probability into independent subproblems.

\begin{lemma}[Bridge decomposition]\label{lem:bridge_decomp}
Let $e=(u,v)$ be a bridge of $G$, and let $G_1$ and $G_2$ be the two connected components of $G$ decomposed by $e$, with $V(G_1)\cap V(G_2)=\emptyset$ and $V(G_1)\cup V(G_2)=V$, where $u\in V(G_1)$ and $v\in V(G_2)$. Then $R_{bal}(G)=R_{bal}(G_1)\cdot R_{bal}(G_2).$
\end{lemma}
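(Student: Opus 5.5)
The plan is to factor the sum over possible worlds of $G$ along the bridge $e=(u,v)$. I will use Theorem~\ref{thm:balance_detection} to characterize balance of a realization as the absence of negative cycles, and reduce to a purely structural fact: no cycle of $G$, hence no cycle of any realization, passes through a bridge.

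\emph{Step 1: the edge set splits.} Since $e$ is a bridge and $G_1,G_2$ are the components of $G-e$, we have the disjoint union $E=E(G_1)\,\dot\cup\,\{e\}\,\dot\cup\,E(G_2)$. A realization $g$ of $G$ is therefore the same thing as a triple $(g_1,x_e,g_2)$, where $g_i$ is a realization of $G_i$ and $x_e\in\{0,1\}$ records whether $e$ is present. Edges are independent, so
\[
\Pr(g)=\Pr(g_1)\cdot p_e^{x_e}(1-p_e)^{1-x_e}\cdot\Pr(g_2).
\]

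\emph{Step 2: the indicator splits.} A bridge lies on no cycle of $G$. Indeed, if a cycle $C$ contained $e$, then $C-e$ would be a $u$--$v$ path in $G-e$, contradicting that $u\in V(G_1)$ and $v\in V(G_2)$ lie in different components. Every realization $g$ is a subgraph of $G$, so every cycle of $g$ is contained either in $g_1$ or in $g_2$. It follows that $g$ has a negative cycle if and only if $g_1$ or $g_2$ has one. By Theorem~\ref{thm:balance_detection},
\[
\mathbb{I}(g\text{ balanced})=\mathbb{I}(g_1\text{ balanced})\cdot\mathbb{I}(g_2\text{ balanced}),
\]
and this indicator does not depend on $x_e$.

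\emph{Step 3: sum.} Substituting both factorizations into the definition of $R_{bal}(G)$, the sum over $g$ becomes a product of three independent sums. The sum over $x_e$ of $p_e^{x_e}(1-p_e)^{1-x_e}$ equals $1$. The other two sums are exactly $R_{bal}(G_1)$ and $R_{bal}(G_2)$, which gives the claim.

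The main obstacle is the connectivity clause in the Definition of a balanced graph, not the cycle argument. If a realization were required to be connected on all of $V$, then $e$ would have to be present, and the identity would acquire an extra factor $p_e$; in that case the lemma as stated holds only when $p_e=1$. Without that requirement, the stated form follows directly from Step~3. I would therefore first state explicitly which reading the lemma uses. I would take balance of a realization to mean negative-cycle-freeness, as characterized in Theorem~\ref{thm:balance_detection}, with connectivity treated per component (equivalently, $V_g$ is the set of vertices actually spanned by $g$). If instead full connectivity of $g$ on $V$ is intended, I would carry the factor $p_e$ through Step~3 and note that the lemma then holds as $R_{bal}(G)=p_e\,R_{bal}(G_1)\,R_{bal}(G_2)$.
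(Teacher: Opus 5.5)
Your proof is correct and follows essentially the same route as the paper's: the bridge lies on no cycle, so its state does not affect balance; balance of $G$ splits into balance of $G_1$ and of $G_2$; and edge-independence turns the probability into a product (your explicit sum over $x_e$ just spells out the paper's implicit marginalization of $e$). Your caveat about the connectivity clause is also well founded: when the paper's proof asserts that the absence of $e$ does not affect balance, it is silently using the negative-cycle-free reading of Theorem~\ref{thm:balance_detection}, and under the literal Definition the extra factor $p_e$ you identify would indeed appear.
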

\begin{lemma}[Articulation point decomposition]\label{lem:cutpoint_decomp}
Let $v$ be an articulation point of $G$, and let $G_1, \ldots, G_k$ be the  connected components of $G$ decomposed by $v$, with $V(G_i)\cap V(G_j)=\{v\}$ for any $1 \leq i, j \leq k$ and $V(G_1)\cup \ldots \cup V(G_k)=V$. Then $R_{bal}(G)=\prod \limits_{i = 1}^k R_{bal}(G_i).$
\end{lemma}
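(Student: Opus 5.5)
The plan is to prove the statement at the level of individual realizations and then use edge independence, as one would for Lemma~\ref{lem:bridge_decomp}. Since $v$ is an articulation point and $G_1,\ldots,G_k$ pairwise share only $v$, the edge sets $E(G_1),\ldots,E(G_k)$ partition $E$. Hence a realization $g$ of $G$ corresponds bijectively to a tuple $(g_1,\ldots,g_k)$, where $g_i$ is the realization of $G_i$ with edge set $E_g\cap E(G_i)$ on vertex set $V(G_i)$. Because the edges exist independently, $\Pr(g)=\prod_i \Pr(g_i)$. So it suffices to show the event identity
\[
\mathbb{I}(g\text{ is balanced})=\prod_{i=1}^k \mathbb{I}(g_i\text{ is balanced}).
\]
Summing over all tuples then factorizes $R_{bal}(G)$ into $\prod_i R_{bal}(G_i)$. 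Alternatively, I could prove the case $k=2$ and induct, but the direct argument is no harder.

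By the definition of a balanced graph together with Theorem~\ref{thm:balance_detection}, $g$ is balanced iff $g$ is connected (spanning $V$) and has no negative cycle. I will treat the two conditions separately.

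\emph{Cycles.} Every cycle $C$ of $g$ lies entirely inside a single $g_i$. If $C$ used edges from two different parts, then removing $v$ would leave a path of $C$ joining vertices of $G_i\setminus\{v\}$ and $G_j\setminus\{v\}$ without passing through $v$. (A cycle visits $v$ at most once, so $C-v$ is a path or the whole cycle.) This contradicts that these lie in different components of $G-v$. Hence $g$ has a negative cycle iff some $g_i$ does.

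\emph{Connectivity.} If every $g_i$ is connected, all of them contain $v$, so their union $g$ is connected. Conversely, suppose $g$ is connected and take $x,y\in V(G_i)$. Take a simple $x$--$y$ path in $g$. If it leaves $G_i$, it must exit and re-enter through $v$, which a simple path can do at most once. So a simple path never leaves $G_i$, and therefore $g_i$ is connected.

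Combining the two equivalences gives the indicator identity, and the product formula follows from independence. I expect the main obstacle to be stating the connectivity direction carefully. The paper's definition of balance requires connectivity, so the factorization is not just a statement about cycles. Cycle localization is the classical block property and should be routine once phrased via $G-v$.
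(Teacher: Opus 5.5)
Your proposal is correct and follows essentially the same route as the paper's proof: every cycle is confined to a single $G_i$, so a realization is balanced if and only if each restriction $g_i$ is, and the probability then factors by independence of the disjoint edge sets $E(G_1),\ldots,E(G_k)$. The one addition is your explicit check that $g$ is connected if and only if every $g_i$ is; the paper's proof passes over this silently even though its definition of a balanced graph requires connectivity, and your simple-path argument through $v$ handles it correctly.
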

The proofs of Lemmas~\ref{lem:bridge_decomp} and~\ref{lem:cutpoint_decomp} are presented in \Cref{app:missing_proofs}.

Lemmas~\ref{lem:bridge_decomp} and~\ref{lem:cutpoint_decomp} imply that, by decomposing the graph from bridges and articulation points, $G$ is separated into highly connected subgraphs, while remaining the same balance rate as the original graph. Therefore, the balance rate can be calculated over blocks. 

\begin{corollary}[Blockwise factorization]
Let $G_1,\ldots,G_k$ be the maximal connected subgraphs (blocks) decomposed by removing all bridges and articulation points of $G$, then $R_{bal}(G)=\prod_{j=1}^k R_{bal}(G_j).$
\end{corollary}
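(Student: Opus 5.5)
The plan is to argue by induction on the number of bridges plus articulation points of $G$, applying Lemma~\ref{lem:bridge_decomp} and Lemma~\ref{lem:cutpoint_decomp} repeatedly until only blocks remain.

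We interpret "removing" bridges and articulation points as in the two lemmas. A bridge $e=(u,v)$ is split off so that its endpoints fall in different pieces. An articulation point $v$ is duplicated into every piece it separates, as in the hypothesis $V(G_i)\cap V(G_j)=\{v\}$. The pieces that can no longer be split are exactly the blocks $G_1,\ldots,G_k$ of the statement.

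For the base case, $G$ has no bridge and no articulation point. Then $G$ is itself its unique block, $k=1$, and the identity is trivial.

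For the inductive step, suppose $G$ has a bridge $e$. Lemma~\ref{lem:bridge_decomp} gives $R_{bal}(G)=R_{bal}(G')\,R_{bal}(G'')$, where $G'$ and $G''$ are the two sides of $e$.

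\begin{itemize}
\item Every remaining bridge of $G$ is a bridge of whichever side contains it, since no cycle passes through $e$.
\item Every articulation point of $G$ other than $u$ or $v$ remains an articulation point of its side.
\item The measure of "decomposition points" strictly decreases on each side.
\item The final blocks of $G'$ and $G''$ together are exactly the blocks of $G$, because blocks are defined intrinsically via 2-connectivity and the cycle structure is unchanged.
\end{itemize}

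Applying the induction hypothesis to each side and multiplying gives the claim. The case of an articulation point is analogous, using Lemma~\ref{lem:cutpoint_decomp} to obtain $\prod_i R_{bal}(G_i)$. Each $G_i$ has fewer cut vertices than $G$, since $v$ is no longer a cut vertex in any $G_i$, and no new cut vertices or bridges are created.

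The main obstacle is the bookkeeping at $u$ and $v$ when a bridge is cut, together with the meaning of a "block" containing a single bridge edge. The lemma as stated assigns the bridge edge to neither $G_1$ nor $G_2$, so I would check that it contributes no factor; this is consistent with Lemma~\ref{lem:bridge_decomp}, which drops it. I would also check that the order of decompositions does not affect the final multiset of blocks. I would settle both points with the standard block–cut tree: the leaves of the recursion are the maximal 2-edge-connected pieces after removing all bridges, intersected at cut vertices. These are canonical, so the product is well defined. Finally, I would note that connectivity requirements in the definition of balance hold in each piece exactly as in the lemmas, so no extra hypothesis is needed.
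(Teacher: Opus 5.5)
Your proposal is correct and follows the paper's route. The paper simply asserts that repeatedly applying Lemmas~\ref{lem:bridge_decomp} and~\ref{lem:cutpoint_decomp} yields the blockwise product, and your strong induction on the number of bridges plus articulation points is that same argument with the bookkeeping made explicit: no new bridges or cut vertices appear, cycles are preserved, and bridge edges contribute no factor. You also read ``removing'' an articulation point as sharing it among the pieces, as in the hypothesis of Lemma~\ref{lem:cutpoint_decomp}, rather than deleting it as Algorithm~\ref{alg:bridge_cutpoint_decomposition} literally does; that is the right reading, since deleting the vertex would destroy the cycles through it and break the identity.
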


\subsection{Hardness Analysis}

In this subsection, we establish the NP-hardness of computing the balance rate. 

Frustration index is the minimum number of edges whose deletion makes all connected components balanced, denoted as $L(G)$ \cite{abelson2017symbolic,harary1959measurement}. The computation of frustration index can be reduced to the classical \textsc{EdgeBipartization} problem on unsigned graphs\cite{aref2020modeling}, which asks for a minimum-cardinality set of edges whose removal results in a bipartite graph \cite{edge_deletion}. Since \textsc{EdgeBipartization} is NP-hard \cite{hartmanis1982computers}, computing the frustration index is NP-hard as well.

In the following theorem, we provide an insight that balance rate computation is inherently difficult by exhibiting a polynomial-time reduction from the frustration index problem.

\begin{theorem}\label{thm:NP-hard}
  There is no polynomial time algorithm for precise computation of $R_{bal}$, unless $\mathbf{P}=\mathbf{NP}$.
\end{theorem}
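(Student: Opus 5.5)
We reduce from computing the frustration index $L(H)$ of a connected deterministic signed graph $H=(V,E,S)$ with $m=|E|$ edges, which is NP-hard by the reduction from \textsc{EdgeBipartization} recalled above. The idea is to give every edge of $H$ the same existence probability $p$, run a hypothetical polynomial-time $R_{bal}$ oracle on the resulting USG, and read off $L(H)$ from the lowest-order nonvanishing term of the answer, viewed as a polynomial in the deletion probability $q=1-p$.

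\textbf{Step 1: write $R_{bal}$ as a polynomial in $q$.} Set $p_e=1-q$ for all $e\in E$. A realization $g$ corresponds to a deleted set $D=E\setminus E_g$ and has probability $q^{|D|}(1-q)^{m-|D|}$. By Theorem~\ref{thm:balance_detection}, $g$ is balanced iff $H-D$ is connected and contains no negative cycle. Writing $N_k$ for the number of $k$-edge sets $D$ with $H-D$ connected and balanced,
\[
R_{bal}(G_q)=\sum_{k=0}^{m} N_k\, q^{k}(1-q)^{m-k}.
\]

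\textbf{Step 2: relate the smallest $k$ with $N_k>0$ to $L(H)$.} This is the main obstacle, since the paper's balance definition requires connectivity while the frustration index does not. Let $D$ be a minimum frustration set, so every component of $H-D$ is balanced. If some $e\in D$ joins two different components of $H-D$, then re-adding $e$ creates no new cycle and keeps every component balanced, contradicting minimality. Hence each edge of $D$ has both ends in one component, so $H-D$ has the same number of components as $H$, namely one. Therefore $H-D$ is connected and balanced, giving $N_{L(H)}>0$. Conversely, any $D$ counted in some $N_k$ is a frustration set, so $N_k=0$ for all $k<L(H)$. Thus $L(H)=\min\{k: N_k>0\}$.

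\textbf{Step 3: recover the $N_k$ with polynomially many oracle calls.} Substituting $x=q/(1-q)$ gives $R_{bal}(G_q)=(1-q)^m\sum_k N_k x^k$. The oracle is queried at $m+1$ distinct rational values $q_0,\dots,q_m\in(0,1)$, for instance $q_i=1/(i+2)$, which give distinct $x_i$. Dividing each answer by $(1-q_i)^m$ yields values of a degree-$m$ integer polynomial at $m+1$ points, and solving the Vandermonde system recovers every $N_k$ exactly. All numbers have bit length polynomial in $m$, and $N_k\le 2^m$, so this interpolation runs in polynomial time. Taking the least $k$ with $N_k\neq 0$ outputs $L(H)$.

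This computes the frustration index in polynomial time, so a polynomial-time algorithm for $R_{bal}$ would imply $\mathbf{P}=\mathbf{NP}$, proving Theorem~\ref{thm:NP-hard}. The step I would check most carefully is Step 2, the reconciliation of the connectivity requirement. I would also confirm that the assumed oracle returns exact rational outputs of polynomial size, which holds here because $R_{bal}$ is a polynomial with integer coefficients evaluated at rationals of small bit length.
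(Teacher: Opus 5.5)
Your proof is correct, but it takes a different route from the paper's.

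The paper makes a single oracle query. It sets every $p_e=1-\varepsilon$ with $\varepsilon=2^{-(|E|+2)}$, so the probability mass of balanced realizations is dominated by those with the fewest deleted edges. It then shows $R_{bal}\ge\varepsilon^{k}(1-\varepsilon)^{|E|-k}$ when $L(G)\le k$, and $R_{bal}\le 2^{|E|}\varepsilon^{k+1}(1-\varepsilon)^{|E|-k-1}$ when $L(G)\ge k+1$. These bounds differ by the factor $4(1-\varepsilon)>1$, so comparing $R_{bal}$ with an explicit threshold $T$ decides whether $L(G)\le k$.

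You instead query at $m+1$ moderate probabilities and interpolate to recover the whole vector $(N_k)$, reading off $L(H)$ as its lowest nonzero index.

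Each route buys something different.
\begin{itemize}
\item The paper's version is a many-one reduction to a single thresholding question. Because the gap is a constant factor, it also shows that approximating $R_{bal}$ multiplicatively within any factor below $2\sqrt{1-\varepsilon}$ is NP-hard, which exact interpolation cannot give. The price is edge probabilities exponentially close to $1$.
\item Your version needs exact answers and $m+1$ calls. In return it uses probabilities of logarithmic bit length and extracts much more: the full generating polynomial of connected balanced spanning subgraphs, which is the usual starting point for $\#\mathrm{P}$-hardness.
\end{itemize}

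Your Step 2 is also a real gain in rigor. The paper's lower bound asserts that some $F$ with $|F|\le k$ leaves a balanced graph, but its definition of balance requires connectivity. The fact that a minimum frustration set never disconnects a connected graph is exactly what that step needs, and the paper leaves it implicit.

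The one sentence you should add is that restricting the frustration-index problem to connected inputs loses no hardness. Either note that the index is additive over components, or join the components by bridges, which lie on no cycle.
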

\begin{proof}
  For a signed graph $G = (V,E,S) $, the frustration index $L(G)$ the minimum number of edges whose deletion makes \(G\) balanced. Deciding whether \( L(G) \le k \) is known to be NP-hard. We reduce this problem to thresholding \( R_{bal} \).
	Given a signed graph \( G = (V, E, S) \) we construct a probabilistic signed graph on the same vertices and edges, with the same signs, where each edge $e \in E$ exists independently with probability 
	\( p_e = 1 - \varepsilon \) for $\varepsilon = 2^{-(|E|+2)}$. This choice of \( \varepsilon \) has polynomial encoding length in $|E|$, so $p_e$ does.
	If \( L(G) \le k \), then there exists a subset \( F \subseteq E \) with 
	\(|F| \le k\) such that \(E \setminus F\) induces a balanced subgraph. 
	The probability that exactly the edges of \(F\) are missing and all others are present is
	\[
	\Pr(E^\prime = E \setminus F) = \varepsilon^{|F|}(1-\varepsilon)^{|E| - |F|} \ge \varepsilon^{k}(1-\varepsilon)^{|E|-k}
	\]
	and therefore $R_{bal} \geq \varepsilon^k (1 - \varepsilon)^{|E|-k}$. Conversely, if \( L(G) \ge k + 1 \), then every balanced subgraph must result 
	from deleting at least \( k+1 \) edges, and thus
	\begin{align*}
		\begin{split}
			R_{bal} &= \sum_{\substack{M \subseteq E \\ (V,E\setminus M,S) \;balanced}}
			\varepsilon^{|M|}(1-\varepsilon)^{|E| - |M|} \leq \\
			&\leq \sum_{i=k+1}^{|E|} \binom{|E|}{i}\varepsilon^{i}(1-\varepsilon)^{|E|-i} \leq 2^{|E|}\varepsilon^{k+1}(1-\varepsilon)^{|E|-k-1}.
		\end{split}
	\end{align*}
	
	Taking the ratio of the lower and upper bounds gives
	\[
	\frac{\text{(LB)}}{\text{(UB)}} 
	\ge \frac{1-\varepsilon}{2^{|E|}\varepsilon}
	= (1-\varepsilon)\cdot 4 > 1,
	\]
	for the chosen \(\varepsilon = 2^{-(|E|+2)}\). 
	Thus the two cases \( L(G) \le k \) and \( L(G) \ge k+1 \) yield disjoint intervals 
	for possible values of \( R_{bal} \). 
	Setting $T = \frac{1}{2}\!\left[
	\varepsilon^{k}(1-\varepsilon)^{|E|-k}
	+ 2^{|E|}\varepsilon^{k+1}(1-\varepsilon)^{|E|-k-1}
	\right],$
	we have \( L(G) \le k \Rightarrow R_{bal} > T \) and 
	\( L(G) \ge k+1 \Rightarrow R_{bal} < T \). 
	Hence deciding whether \( R_{bal} > T \) is equivalent to deciding whether \( L(G) \le k \). 
	Since the latter problem is NP-hard, the computation (or even thresholding) of 
	\( R_{bal} \) is NP-hard as well.
\end{proof}

The NP-hardness of computing $R_{bal}$ highlights the impracticality of exact computation for large-scale graphs, demonstrating the necessity of developing efficient approximation algorithms.

%% file: algorithm.tex
\section{ALGORITHM}\label{sec:algo}

In this section we describe our sampling-based algorithmic framework for evaluating the balance rate, incorporating variance reduction techniques. Specifically, given a USG $G=(V,E,P,S)$, we first apply graph preprocessing by decomposing $G$ into a series of connected components (Section \ref{sec:preprocess}), then introduce CCI, a sampling process based on Rao-Blackwellized spanning tree in Section \ref{sec:rb_sample}
, and show its theoretical justification. The overall framework is provided in Section \ref{sec:sampling_all}, and we derive the confidence bounds by Delta Method in \Cref{sec:delta_method_bounds}.

\subsection{Graph Preprocessing}\label{sec:preprocess}

Intuitively, balance rate can be directly estimated by Naive Monte-Carlo (NMC) algorithm. However, NMC treats the graph as a monolithic entity, where the global balance outcome is a binary variable. As a result, it fails to exploit the independence of graph components, leading to slow convergence. From Lemmas \ref{lem:bridge_decomp} and \ref{lem:cutpoint_decomp}, topological features such as bridges and articulation points divide the graph into independent components. By removing them, we replace the single high-variance global $R_{bal}$ estimation with several lower-variance local balance rates, ensuring that noise from one dense component does not affect other parts. Inspired by this, we apply a preprocessing step to decompose the graph into highly connected subgraphs, enabling faster convergence and stable variance by handling components independently, within which connectivity is more homogeneous.

We formalize the preprocessing procedure in \Cref{alg:bridge_cutpoint_decomposition}. For a given USG, we first identify all bridges and articulation points. The identification is based on Tarjan's algorithm \cite{tarjan1972depth}, conducted in a depth-first search (DFS) traversal (Line \ref{line:init_start}-\ref{line:init_end}). While searching, we continuously apply Tarjan's algorithm to probe whether the currently visited edge or vertex meets the conditions. If so, we use two sets, $B$ and $A$ to store bridges and articulation points separately. After DFS, we remove all edges and vertices in $B$ and $A$ from $G$, then compute the connected components in the remaining graph $G'$. After \Cref{alg:bridge_cutpoint_decomposition}, the USG is decomposed into connected components $\{C_1,C_2,\dots,C_k\}$, where $E(C_i\cap C_j)=\emptyset$ and $\bigcup_{i=1}^k V(C_i)=V$ for $1\leq i,j\leq k, i\neq j$. This decomposition is purely structural and does not alter any statistics computed within the components; it only reduces interdependence and variance across them.

\begin{algorithm}[t]
    \caption{Graph Decomposition}
    \label{alg:bridge_cutpoint_decomposition}
    \begin{algorithmic}[1]
        \Require A USG $G=(V,E,P,S)$
        \Ensure A set of connected components $\mathcal{C}$ of $G$
        \State Initialize $B \leftarrow \emptyset$ (bridges), $A \leftarrow \emptyset$ (articulation points)
        \ForAll{$v\in V$}
        \State Initialize $low[v]\gets0,disc[v]\gets0$
        \EndFor
        \State Run DFS on $G$ and update $low[v],disc[v]$ while visiting\label{line:init_start}
        \ForAll{edges $(u,v) \in E$ explored by DFS}
        \If{$low(v) > disc(u)$} \Comment{Meet bridge conditions}
        \State Add $(u,v)$ to $B$
        \EndIf
        \EndFor
        \ForAll{vertices $v \in V$ visited by DFS}
        \If{$v$ meets articulation condition by Tarjan's}
        \State Add $v$ to $A$
        \EndIf
        \EndFor\label{line:init_end}
        \State $G' \leftarrow G$ with all edges in $B$ and all vertices in $A$ removed
        \State $C_1,C_2,\dots,C_k\gets$ connected components of $G'$
        \State $\mathcal{C}\gets \{C_1,C_2,\dots,C_k\}$
        \State \Return $\mathcal{C}$
    \end{algorithmic}
\end{algorithm}

\vspace{0.25cm}
\noindent \textit{Complexity analysis.}
\Cref{alg:bridge_cutpoint_decomposition} runs a single DFS to identify all bridges and articulation points. This requires $O(|V|+|E|)$ time and space. Removing the identified edges and vertices and computing the connected components of the remaining graph also takes
$O(|V|+|E|)$ time. Hence, the total time complexity of the preprocessing step is linear in
the size of the input graph.

The memory usage is also linear, dominated by the storage of DFS visiting variables (discovery times, low-link values, and recursion stack). Since the procedure is executed only once as a preprocessing step, its overhead is negligible compared to the sampling algorithms
applied independently to each resulting component.

\vspace{0.25cm}
\noindent \textit{Variance analysis.}
We next prove that, for any estimator whose construction is based on our preprocessing, sampling on the decomposed components strictly dominates sampling on the original graph in terms of variance, unless in the degenerate case where all components are deterministic.

\paragraph{Estimator A (joint sampling).}
Draw $N$ independent realizations $\left(\hat{R}_{bal,1}^{(k)},\dots,\hat{R}_{bal,m}^{(k)}\right)_{k = 1}^N,$
and define
\begin{equation}
    \widehat R_{joint}
    \;=\;
    \frac{1}{N}\sum_{k=1}^N
    \hat{R}_{bal}^{(k)},
    \qquad
    \hat{R}_{bal}^{(k)} := \prod_{j=1}^m \hat{R}_{bal,j}^{(k)}.
\end{equation}

Estimator A corresponds to sampling on the original graph. 

\paragraph{Estimator B (product of marginal means).}
For each $j\in\{1,\dots,m\}$, draw $N$ independent samples
$\{\hat{R}_{bal,j}^{(k)}\}_{k=1}^N$ and define
\[
 \bar R_{bal,j}
:=
\frac{1}{N}\sum_{k=1}^N \hat{R}_{bal,j}^{(k)},
\qquad
\widehat R_{\mathrm{prod}}
:=
\prod_{j=1}^m \bar{R}_{bal,j}.
\]

Estimator B corresponds to sampling on the decomposed graph. 

Independence of the $R_{bal,j}$ immediately implies
$\mathbb{E}[\widehat R_{\mathrm{joint}}]=\mathbb{E}[\widehat R_{\mathrm{prod}}]=\mathbb{E}[R_{bal}].$

The following theorem shows that $\widehat R_{\mathrm{prod}}$ has uniformly smaller variance. We defer the proof to \Cref{app:missing_proofs}.

\begin{theorem}\label{thm:decomp_var_reduction}
    \label{thm:variance-reduction}
    For all $N\ge 1$, $\mathrm{Var}(\widehat R_{prod})\le \mathrm{Var}(\widehat R_{joint}),$
    with strict inequality unless all $\hat{R}_{bal,j}$ are almost surely constant.
\end{theorem}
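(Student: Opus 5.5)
We prove the claim by writing both variances explicitly in terms of the first two moments of each component and comparing them term by term. Fix notation: for each component $j$ let $\mu_j:=\mathbb{E}[\hat R_{bal,j}]$ and $s_j:=\mathbb{E}[\hat R_{bal,j}^2]=\mu_j^2+\sigma_j^2$, with $\sigma_j^2=\mathrm{Var}(\hat R_{bal,j})$. The components are mutually independent by the bridge and articulation-point decompositions (Lemmas~\ref{lem:bridge_decomp} and~\ref{lem:cutpoint_decomp}), and the draws are i.i.d.\ across $k$.

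For the joint estimator, each summand is a product of independent factors, so
\[
\mathrm{Var}(\widehat R_{joint})=\frac{1}{N}\Bigl(\prod_{j=1}^m s_j-\prod_{j=1}^m\mu_j^2\Bigr).
\]
For the product estimator, the means $\bar R_{bal,j}$ are independent across $j$ with $\mathbb{E}[\bar R_{bal,j}]=\mu_j$ and $\mathbb{E}[\bar R_{bal,j}^2]=\mu_j^2+\sigma_j^2/N$. Hence
\[
\mathrm{Var}(\widehat R_{prod})=\prod_{j=1}^m\Bigl(\mu_j^2+\frac{\sigma_j^2}{N}\Bigr)-\prod_{j=1}^m\mu_j^2 .
\]

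Next, expand both expressions over subsets $S\subseteq\{1,\dots,m\}$ of components whose variance factor is selected. With the notation $\sigma_S^2:=\prod_{j\in S}\sigma_j^2$ and $\mu_{S^c}^2:=\prod_{j\notin S}\mu_j^2$,
\[
N\,\mathrm{Var}(\widehat R_{joint})=\sum_{S\neq\emptyset}\sigma_S^2\,\mu_{S^c}^2,
\qquad
N\,\mathrm{Var}(\widehat R_{prod})=\sum_{S\neq\emptyset}N^{1-|S|}\,\sigma_S^2\,\mu_{S^c}^2 .
\]
Every term is nonnegative, and $N^{1-|S|}\le 1$ for $N\ge1$ and $|S|\ge1$. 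Comparing term by term gives the stated inequality.

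Strictness is the delicate step. The difference equals $\frac1N\sum_{|S|\ge2}(1-N^{1-|S|})\sigma_S^2\mu_{S^c}^2$, which is positive exactly when $N\ge2$ and some term with $|S|\ge2$ is nonzero. This condition is genuinely stronger than the stated one, for two reasons:
\begin{itemize}
\item For $N=1$ the two estimators coincide, so equality holds even when the components are random.
\item Equality also holds if only one component is nonconstant, or if a vanishing mean $\mu_j=0$ kills the relevant terms.
\end{itemize}
The theorem's ``unless all constant'' clause is therefore not correct as stated. I would prove strict inequality under the sufficient hypothesis that $N\ge2$ and at least two components have $\sigma_j^2>0$ (all other $\mu_j>0$ guarantee a nonzero term, e.g.\ $S$ the set of two nonconstant components). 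I would remark that in all other cases, including $N=1$, the inequality holds with equality.
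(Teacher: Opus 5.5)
Your argument is the paper's own proof: the same second-moment formulas for $\mathrm{Var}(\widehat R_{joint})$ and $\mathrm{Var}(\widehat R_{prod})$, the same expansion over nonempty subsets $S$, and the same termwise comparison of $N^{-1}$ with $N^{-|S|}$, so the weak inequality is established exactly as in the paper. Your objection to the strictness clause is also correct and points to a real error in the paper: its proof concludes strictness ``unless $b_j=0$ for all $j$'' but overlooks that the $|S|=1$ terms carry the same coefficient $N^{-1}$ in both expansions, and since the estimators lie in $[0,1]$ (so every nonconstant component has $\mu_j>0$), your condition ($N\ge 2$, at least two nonconstant components, all $\mu_j>0$) is in fact necessary and sufficient for strict inequality.
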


\subsection{Cycle-Chord Integration Sampling}\label{sec:rb_sample}

In USGs, the event of structural balance is often rare. However, standard sampling instantiates every edge, introducing unnecessary Bernoulli noise for edges irrelavant to balance constraints, thus resulting in a high-variance estimator. To address this, we propose Cycle-Chord Integration (CCI) sampling, to obtain a spanning tree based estimator, guaranteeing a strictly lower variance than standard sampling.

The key point of CCI is to develop type-specific sampling from a random spanning forest based on the classification of edges. From the balance condition, forests are clearly balanced, thus they can be regarded as a maximum balanced graph in a connected component, enabling a non-zero balance rate. We partition the edge set into \textit{tree edges}, which we sample to determine connectivity, and \textit{chord edges}, which we treat analytically. Specifically, we only sample the tree edges to connect disjoint trees for a basic balanced sub-realization. Then for chord edges, we compute the probability that the cycle remains balanced, updating our estimator weight. By analytically integrating out the randomness of these cycle-closing edges, we replace binary outcomes with continuous expectations, yielding a lower-variance estimator.

We use an extension of the classical Disjoint Set Union (DSU) for efficient data management. To elaborate, we augment each vertex $v$ with a $\mathbb{Z}_2$-valued label $\rho_v$, representing the parity of negative edges on the path to its component root. This augmentation allows DSU to enforce the invariant $\rho_u\oplus\rho_v=s$ (XOR computation) for any signed edge $(u, v)$ with sign $s\in\{0, 1\}$, ensuring consistent parity within connected components, further detecting negative cycles directly. Moreover, it preserves the amortized $O(\alpha(n))$ complexity of standard DSU \cite{galil1991data}, as parity updates during path compression incur only constant-time overhead. 

The detailed procedure of CCI is presented in \Cref{alg:rb_sample}. The algorithm is implemented on the connected components obtained from \Cref{alg:bridge_cutpoint_decomposition}. We begin by initializing the DSU structure with parity tracking, and set the initial balance rate as 1. As for sampling stage, the algorithm distinguishes edges between two cases: tree edges and chord edges. For a tree edge, we use a Bernoulli trial to determine whether it is included in the spanning forest, and update the DSU structure accordingly. Otherwise, we detect whether it causes negative cycles by evaluating the parity of the current path and comparing it with the edge's sign (Line \ref{line:NC_detect_start}-\ref{line:NC_detect_end}). If so, we adjust the $R_{bal}$ estimator by multiplying with edge absent probability in Line \ref{line:br_update}. \Cref{alg:rb_sample} ultimately returns an unbiased estimator for the balance rate, achieving improved efficiency and accuracy in the estimation process.

\begin{algorithm}[t]
    \caption{Cycle-Chord Integration Sampling}
    \label{alg:rb_sample}
    \begin{algorithmic}[1]
        \Require A connected USG $G=(V,E,P,S)$
        \Ensure An unbiased estimator $\hat{R}_{bal}$
        \State Initialize DSU structure $\mathcal{D}$ with parity tracking $\rho$
        \State $\hat{R}_{bal} \leftarrow 1.0$
        \For{each edge $e=(u, v)$ with sign $s_e$ and prob $p_e$ in $E$}
        \State $(root_u, \rho_u) \leftarrow \mathcal{D}.\text{find}(u)$
        \State $(root_v, \rho_v) \leftarrow \mathcal{D}.\text{find}(v)$
        \If{$root_u \neq root_v$}  \Comment{Case 1: Tree Edge}
        \State Sample Bernoulli $X_e \sim \text{Ber}(p_e)$
        \If{$X_e = 1$}
        \State $\mathcal{D}.\text{union}(u, v, s_e)$
        \EndIf
        \Else   \Comment{Case 2: Chord Edge (Analytical Step)}
        \State $\sigma_{\text{path}} \leftarrow (\rho_u + \rho_v) \pmod 2$\label{line:NC_detect_start}
        \State $s_{\text{edge}} \leftarrow (1 \text{ if } s_e = -1 \text{ else } 0)$
        \If{$\sigma_{\text{path}} \neq s_{\text{edge}}$}\label{line:NC_detect_end}
        \State $\hat{R}_{bal} \leftarrow \hat{R}_{bal} \times (1 - p_e)$\label{line:br_update}
        \EndIf
        \EndIf
        \EndFor
        \State \Return $\hat{R}_{bal}$
    \end{algorithmic}
\end{algorithm}

\vspace{0.25cm}
\noindent \textit{Complexity analysis.}
CCI sampling relies on DSU with path compression and rank heuristics. The sampling iterate through all $|E|$ edges exactly once, and each DSU operation takes $O(\alpha(|V|))$ amortized time, where $\alpha$ is the inverse Ackermann function. Therefore, the total time complexity per sample is $O(|E| \alpha(|V|))$, which is nearly linear in the number of edges. This matches the complexity of Kruskal's algorithm, and is computationally equivalent to a single naive MC run, yet provides significantly more information.

\vspace{0.25cm}
\noindent \textit{Unbiasedness analysis.}
We show the unbiasedness of CCI estimator $\hat{R}_{bal}$. Let $I(\mathbf{X})$ be the indicator variable that is $1$ if $\mathbf{X}$ is balanced and $0$ otherwise. We define a filtration based on the spanning forest construction. Let $E_T \subseteq E$ be the set of edges processed as tree edges and $E_C = E \setminus E_T$ be the set of chord edges. Let $\mathcal{F}_T$ be the $\sigma$-algebra generated by the outcomes of edges in $E_T$.

The estimator produced by Algorithm \ref{alg:rb_sample} is $\hat{R}_{bal} = \mathbb{E}[I(\mathbf{X}) \mid \mathcal{F}_T]$.
By the Law of Iterated Expectations:
$\mathbb{E}[\hat{R}_{bal}] = \mathbb{E}[ \mathbb{E}[I(\mathbf{X}) \mid \mathcal{F}_T] ] = \mathbb{E}[I(\mathbf{X})] = R_{bal}.
$
Thus, $\hat{R}_{bal}$ is an unbiased estimator of the true balance rate $R_{bal}$.

\vspace{0.25cm}
\noindent \textit{Variance Reduction.}
$\hat{R}_{bal}$ can be analyzed using the law of total variance (Rao-Blackwell theorem \cite{wasserman2004all}):
$
    \text{Var}(I(\mathbf{X})) = \text{Var}(\mathbb{E}[I(\mathbf{X}) \mid \mathcal{F}_T]) + \mathbb{E}[\text{Var}(I(\mathbf{X}) \mid \mathcal{F}_T)].
$
Rearranging for the variance of our estimator $\hat{R}_{bal} = \mathbb{E}[I(\mathbf{X}) \mid \mathcal{F}_T]$,
$
    \text{Var}(\hat{\pi}) = \text{Var}(I(\mathbf{X})) - \mathbb{E}[\text{Var}(I(\mathbf{X}) \mid \mathcal{F}_T)].
$
Since variance is always non-negative, $\mathbb{E}[\text{Var}(I(\mathbf{X}) \mid \mathcal{F}_T)] \ge 0$, Therefore $\text{Var}(\hat{R}_{bal}) \le \text{Var}(I(\mathbf{X}))$.
The term $\mathbb{E}[\text{Var}(I(\mathbf{X}) \mid \mathcal{F}_T)]$ represents the uncertainty eliminated by analytically integrating the chord edges. In sparse graphs, where $|E_C|$ is small relative to $|E|$, the reduction is substantial because we avoid the binary noise associated with the most critical, conflict-generating edges.

\subsection{Overall Framework}\label{sec:sampling_all}

The overall balance rate estimation framework is summarized in \Cref{alg:sampling_all_}. It combines graph preprocessing (\Cref{alg:bridge_cutpoint_decomposition}) and Cycle-Chord Integration sampling (\Cref{alg:rb_sample}) to provide an efficient and unbiased estimator for the balance rate in USGs. By structural decomposition and Rao-Blackwellization, the framework significantly reduces variance compared to NMC methods. 

\begin{algorithm}[t]
    \caption{Overall Framework for Balance Rate Estimation}
    \label{alg:sampling_all_}
    \begin{algorithmic}[1]
        \Require A USG $G=(V,E,P,S)$, sample size $N$
        \Ensure An unbiased estimator $R_{bal}$ of $R_{bal}(G)$
        \State $\{G_1,\ldots,G_k\} \gets$ \Cref{alg:bridge_cutpoint_decomposition}$(G)$
        \For{$i = 1,\ldots,k$}
            \State $\hat{R}_{bal,i} = 0$
            \For{$j = 1, \ldots, N$}
                \State $\hat{R}_{bal,i}^{(j)} \gets$ \Cref{alg:rb_sample}$(G_i)$
                \State $\hat{R}_{bal,i} \gets \hat{R}_{bal,i} + \hat{R}_{bal,i}^{(j)} $
            \EndFor
            \State $\hat{R}_{bal,i} \gets \hat{R}_{bal,i} / N$
        \EndFor
        \State $\hat{R}_{bal}\gets\prod_{i=1}^k \hat{R}_{bal,i}$
        \State \Return $\hat{R}_{bal}$
    \end{algorithmic}
\end{algorithm}

Given a USG $G$ and sample size $N$, \Cref{alg:sampling_all_} begins by applying preprocessing to decompose the graph into independent connected components without bridges and articulation points. Then for each decomposed $G_i$, we conduct CCI sampling for $N$ times to get the local balance rate estimator $\hat{R}_{bal,i}$. The global balance rate $\hat{R}_{bal}$ is then computed as the product of all local estimates. It remains the unbiasedness and low variance obtained from \Cref{alg:rb_sample}, and the total complexity is 
$O(|V|+|E|+N\cdot\sum_{i=1}^k|E_i|\cdot\alpha(|V_i|)),$
which is nearly linear in the graph scale for a practical $N$. Note that $\sum_{i=1}^k|E_i|$ is the number of non-bridge edges, therefore, the algorithm is particularly efficient on sparse graphs.

\Cref{alg:sampling_all_} provides a scalable and robust solution for balance rate estimation, combining theoretical rigor with practical efficiency. It is particularly effective for large, sparse graphs, where structural decomposition significantly reduces complexity and variance. It is also well-suited for graphs with rare balance events, where naive sampling methods struggle to get effective results. 

Additionally, \Cref{alg:sampling_all_} is naturally parallelizable with respect to the Rao-Blackwellized Monte Carlo sampling loop. For each component $G_i$, the estimates $\hat{R}_{bal,i}^{(j)}$ are generated independently, allowing the $N$ sample evaluations of \Cref{alg:rb_sample} to be executed in parallel with no synchronization overhead. In terms of efficiency, the bridge/articulation point decomposition incurs a one-time $O(|E|)$ preprocessing cost. Since the per-sample complexity of \Cref{alg:rb_sample} is at least $O(|E|)$ and the number of samples is typically greater than $50$, the total runtime is dominated by sampling, making the preprocessing cost negligible in practice and resulting in excellent scalability.

%% file: bounds.tex
\subsection{Confidence Bounds}
\label{sec:delta_method_bounds}

In this section, we derive asymptotic confidence intervals for $R_{bal}$. Using the Multivariate Delta Method and Central Limit Theorem, we account for the Rao-Blackwellized nature and specific properties of the estimators. We establish a robust framework for computing confidence intervals that are both sensitive and asymptotically accurate for large sample sizes.

\vspace{0.25cm}
\noindent \textit{Balance Rate Estimator.}
In our analysis, we are interested in estimating the product of $\hat{R}_{bal,1}, \ldots, \hat{R}_{bal,k}$, where each $\hat{R}_{bal,j} \in [0, 1]$ represents a Rao-Blackwellized estimator. Since they are continuous with variance significantly lower than the Bernoulli variance $\pi(1-\pi)$, standard parametric likelihoods (e.g., Binomial) provide overly conservative and incorrect intervals. Let $\mu_j = \mathbb{E}[\hat{R}_{bal,j}]$ be the true mean of the $j$-th variable. By the independence of these estimators, our target parameter is $R_{bal} = \prod_{j=1}^k \mu_j.$
Given $N$ samples for each variable, we estimate $R_{bal}$ using the product of the empirical means $\hat{R}_{bal} = \prod_{j=1}^k \hat{\mu}_j$, where $\hat{\mu}_j = \frac{1}{N} \sum_{t=1}^N \hat{R}_{bal,j}^{(t)}.$

\setlength{\abovedisplayskip}{2pt}
\setlength{\belowdisplayskip}{1pt}

\vspace{0.25cm}
\noindent \textit{Asymptotic Justification by Delta Method}
We derive a $(1-\delta)$ confidence interval by the Multivariate Delta Method (see \cite{casella2002statistical}\cite{van2000asymptotic}). This method provides a functional form of the Central Limit Theorem (CLT). Specifically, let $\boldsymbol{\hat{\mu}} = (\hat{\mu}_1, \ldots, \hat{\mu}_k)^T$ and $g(\boldsymbol{\mu}) = \prod_{j=1}^k \mu_j$. The Delta Method states that as $N \to \infty$, we have $\sqrt{N}(\hat{R}_{bal} - R_{bal}) \xrightarrow{d} \mathcal{N}\left(0, \nabla g(\boldsymbol{\mu})^T \Sigma \nabla g(\boldsymbol{\mu})\right),$
where $\Sigma$ is the diagonal covariance matrix with entries $\Sigma_{jj} = \text{Var}(\hat{R}_{bal,j}) = \sigma_j^2$. 
The gradient $\nabla g(\boldsymbol{\mu})$ has components $\frac{\partial g}{\partial \mu_j} = \prod_{t \neq j} \mu_t$. The asymptotic variance $\sigma^2_{R}$ of our estimator $\hat{R}_{bal}$ is:
\begin{equation*}
  \sigma^2_{R} = \sum_{j=1}^k \left( \prod_{t \neq j} \mu_t \right)^2 \sigma_j^2 = \sum_{j=1}^k \left( \frac{R_{bal}}{\mu_j} \right)^2 \sigma_j^2.
\end{equation*}

In practice, the unbiased sample variance for each $\hat{R}_{bal,j}$ is:
\begin{equation*}
    \hat{\sigma}_j^2 = \frac{1}{N-1} \sum_{t=1}^N (\hat{R}_{bal,j}^{(t)} - \hat{\mu}_t)^2.
\end{equation*}

Then, we have estimated standard error $\widehat{SE}(\hat{R}_{bal})$ of the product:
\begin{equation*}
    \widehat{SE}(\hat{R}_{bal}) = \sqrt{\frac{1}{N} \sum_{j=1}^k \left( \prod_{t \neq j} \hat{\mu}_t \right)^2 \hat{\sigma}_j^2}.
\end{equation*}

The $(1-\delta)$ confidence interval is then given by
\begin{equation*}
    R_{bal} \in \left[ \hat{R}_{bal} - z_{1-\delta/2} \widehat{SE}(\hat{R}_{bal}), \quad \hat{R}_{bal} + z_{1-\delta/2} \widehat{SE}(\hat{R}_{bal}) \right].
\end{equation*}

For $N \geq 100$, the empirical means $\hat{\mu}_j$ are well-approximated by normal distribution via CLT, particularly for bounded variables in $[0, 1]$. In asymptotic statistics, $N \approx 30$ is often considered the Gaussian behavior threshold; at $N=100$, the first-order Taylor approximation utilized by the Delta Method is robust \cite{wasserman2004all}.

%% file: experiments.tex
\vspace{-0.2cm}
\section{EVALUATION} \label{sec:experiments}

In this section, we address the following research questions to
evaluate various important aspects of our algorithm:

\begin{itemize}[leftmargin=*]
    \item \textbf{RQ1 (Efficiency).} 
    How efficient and scalable is the proposed algorithm for computing the balance rate of uncertain signed graphs as the graph size, density, and uncertainty level increase?

    \item \textbf{RQ2 (Effectiveness).} Does balance rate provide an informative measure of balance in uncertain signed graphs?

    \item \textbf{RQ3 (Applicability).} Can the balance rate support practical graph analysis tasks beyond descriptive measurement?
\end{itemize}

\subsection{Experimental Settings}

\textit{Datasets.} We evaluate our algorithm in both real-world and synthetic USGs, with the main characteristics summarized in \Cref{tab:data} and \Cref{tab:synth_data} respectively. For real-world datasets, we list the category, vertex size $|V|$, edge size $|E|$, ratio of negative edges $\rho^-=\frac{|E^{-}|}{|E|}$ and ratio of non-zero elements $\delta=\frac{2|E|}{|V|(|V| - 1)}$ in \Cref{tab:data}. All real-world datasets are publicly-available
\footnote{snap.stanford.edu and konect.cc}.
Since real-world signed networks typically lack explicit edge existence probabilities, we generate synthetic probabilities to model structural uncertainty. To ensure our evaluation focuses on the non-trivial regime—avoiding saturation scenarios where the expected balance is asymptotically zero or one—we employ a density-adaptive scaling strategy. We sample mean edge probabilities from a uniform distribution scaled inversely by $|E| / \sqrt{|V|}$. This heuristic targets the phase transition of structural balance, ensuring the generated instances preserve complex cycle interactions without collapsing into trivial states of sparsity or deterministic rigidity.

For synthetic datasets, we construct from a randomly generated spanning tree on $n$ nodes, then add edges to $\lfloor 3n/2 \rfloor$ uniformly at random until the graph reaches a sparse regime. Finally, the network is densified by repeatedly closing length-two paths into triangles, until $|E|=5n$. The resulting topology combines a connected backbone, random long-range edges, and local triadic closures, yielding graphs with moderate density and nontrivial clustering. We endow the generated topology with existence probabilities following the heuristic described in the previous paragraph. We list $|V|$, $|E|$, $\rho^-$ and $\delta$ of synthetic networks in \Cref{tab:synth_data}.

\vspace{0.1cm}
\noindent \textit{Setting.} All algorithms were implemented in C++. Parallelization was performed using OpenMP. Unless stated otherwise, all experiments were executed using 4 OpenMP threads on an Apple M2 Pro processor with 12 physical cores (12 logical cores), running macOS 26.2. All reported runtimes correspond to this fixed parallel configuration on a shared-memory architecture. The code was compiled with compiler optimizations enabled, and identical settings were used across all experiments to ensure a fair comparison.

\subsection{Efficiency}


Table~\ref{tab:sampling-comparison} compares naive Monte Carlo (MC) sampling with the proposed Rao–Blackwellized (RB) estimator in terms of runtime and empirical variance across increasingly large sparse graphs. As graph size grows from tens to hundreds of thousands of vertices, both methods exhibit near-linear runtime scaling with respect to the number of edges, confirming the scalability of the sampling-based framework. While RB sampling incurs a modest constant-factor overhead in runtime relative to naive MC, this cost remains stable across all instances and does not grow disproportionately with graph size or the number of connected components.

Crucially, RB sampling achieves a dramatic and consistent reduction in estimator variance—by more than an order of magnitude in all cases—yielding substantially more stable estimates for the same number of samples. This variance reduction directly translates into improved robustness and faster convergence in practice, outweighing the additional per-sample cost. Overall, these results demonstrate that the proposed approach is computationally efficient at scale while providing significantly more reliable balance rate estimates than naive Monte Carlo sampling. We also provide the visualization  of prefix confidence intervals for both sampling strategies in \Cref{plot:conf_intervals}, computed as described in \Cref{sec:delta_method_bounds}.

\begin{table}
  \centering
    \caption{Real-world datasets: $|V|$, $|E|$, ratio of negative edges $\rho^{-}$ and ratio of non-zero elements $\delta$}
    \label{tab:data}
  \begin{tabular}{cccccc}
    \hline
    Dataset & Category & $|V|$ & $|E|$ & $\rho^-$ & $\delta$ \\
        \hline 
        \textsc{Bitcoin} & Trustness & 5k & 21k & 0.15 & 1.2e-03 \\
\textsc{Epinions} & Trustness & 131k & 711k & 0.17 & 8.2e-05 \\
\textsc{Slashdot} & Friendship & 82k & 500k & 0.23 & 1.4e-04 \\
\textsc{Twitter}  & Interaction & 10k & 251k & 0.05 & 4.2e-03 \\
\textsc{Conflict} & Conflicts & 116k & 2M & 0.62 & 2.9e-04 \\
\textsc{Elections}& Voting & 7k & 100k & 0.22 & 3.9e-03 \\
\textsc{Politics} & Interaction & 138k & 715k & 0.12 & 7.4e-05 \\ 
\hline
  \end{tabular}
\end{table}
\begin{table}

\vspace{-0.28cm}
	\centering 
    \caption{Synthetic datasets: $|V|$, $|E|$, ratio of negative edges $\rho^{-}$ and ratio of non-zero elements $\delta$}
	\label{tab:synth_data}
	\begin{tabular}{ccccc}
    \hline 
    Dataset & $|V|$ & $|E|$ & $\rho^-$ & $\delta$ \\
    \hline
    \textsc{sparse\_500} & 500 & 2.5k & 0.16 & 0.02\\
    \textsc{sparse\_5000} & 5k & 25k & 0.15 &2.0e-03\\
    \textsc{sparse\_10000} & 10k & 50k & 0.16 &1.0e-03\\
    \textsc{sparse\_50000} & 50k & 250k & 0.15 &2.0e-04\\ 
    \textsc{sparse\_100000} & 100k & 500k & 0.16 &1.0e-04\\
    \hline
    \end{tabular}
    \vspace{-0.3cm}
\end{table}

\begin{table*}[t]
\centering
\caption{Runtime, empirical variance, and balanced rate comparison between naive Monte Carlo (MC) sampling and Rao--Blackwellized (RB) sampling. Runtimes are reported for $200$ samples per dataset.} 
\label{tab:sampling-comparison}
\begin{tabular}{|c|cccccccc|}
\hline
Dataset & $|V|$ & $|E|$ & \#Comp &
MC time (ms) & MC var &
RB time (ms) & RB var & Balance rate \\
\hline
\textsc{sparse\_500}    & 0.5k   & 2.5k     & 2   & 1    & 1.2$\cdot 10^{-1}$ & 3    & 8.2$\cdot 10^{-3}$ & 0.143 \\
\textsc{sparse\_5000}     & 5k     & 25k      & 9   & 14   & 1.5$\cdot 10^{-1}$ & 24   & 3.7$\cdot 10^{-3}$ & 0.228 \\
\textsc{sparse\_10000}    & 10k    & 50k      & 9   & 26   & 1.8$\cdot 10^{-1}$ & 45   & 3.4$\cdot 10^{-3}$ & 0.245 \\
\textsc{sparse\_50000}    & 50k    & 250k     & 51  & 137  & 2.0$\cdot 10^{-1}$ & 207  & 1.3$\cdot 10^{-3}$ & 0.286 \\
\textsc{sparse\_100000}   & 100k   & 0.5M     & 111 & 268  & 2.0$\cdot 10^{-1}$ & 408  & 9.4$\cdot 10^{-4}$ & 0.289 \\
\textsc{bitcoin}        & 5.9k   & 21.5k    & 7   & 8    & 5.0$\cdot 10^{-3}$ & 24   & 1.4$\cdot 10^{-3}$ & 0.0072 \\
\textsc{elections}      & 7.1k   & 100.7k   & 1   & 55   & 2.3$\cdot 10^{-1}$ & 94   & 4.3$\cdot 10^{-2}$ & 0.403 \\
\textsc{twitter}        & 10.9k  & 251.4k   & 2   & 168  & 5.6$\cdot 10^{-2}$ & 214  & 6.9$\cdot 10^{-3}$ & 0.914 \\
\textsc{slashdot}       & 82.1k  & 0.5M     & 252 & 334  & 4.8$\cdot 10^{-2}$ & 400  & 6.6$\cdot 10^{-4}$ & 0.953 \\
\textsc{epinions}       & 131.6k & 0.71M    & 481 & 243  & 9.0$\cdot 10^{-2}$ & 596  & 2.0$\cdot 10^{-2}$ & 0.141 \\
\textsc{politics}       & 138k   & 0.72M    & 130 & 431  & 1.8$\cdot 10^{-1}$ & 556  & 1.5$\cdot 10^{-2}$ & 0.798 \\
\textsc{conflict}       & 116.7k & 2.03M    & 964 & 1344 & 4.3$\cdot 10^{-2}$ & 1628 & 1.8$\cdot 10^{-4}$ & 0.961 \\
\hline
\end{tabular}
\end{table*}

\begin{table}[h]
\vspace{-0.2cm}
	\centering
    \caption{MBS retrieved from real-world datasets: $|V|$, $|E|$, ratio of negative edges $\rho^{-}$ and ratio of non-zero elements $\delta$}
    \label{tab:MBS}
	\begin{tabular}{ccccc}
		\hline
		Dataset & $|V|$ & $|E|$ & $\rho^-$ & $\delta$ \\
        \hline 
\textsc{Bitcoin} &  5k & 12k & 0.09 & 1.1e-03 \\
\textsc{Epinions} &  85k & 307k & 0.07 & 8.5e-05 \\
\textsc{Slashdot} &  56k & 177k & 0.04 & 1.1e-04 \\
\textsc{Twitter} &  10k & 210k & 0.001 & 4.5e-03 \\
\textsc{Conflict} &  67k & 808k & 0.64 & 3.6e-04 \\
\textsc{Elections} &  4k & 24k & 0.07 & 3.3e-03 \\
\textsc{Politics} &  74k & 322k & 0.04 & 1.2e-04 \\
\hline
	\end{tabular}

\end{table}

\subsection{Effectiveness}

We demonstrate that balance rate behaves consistently with structural balance theory: it equals one for perfectly balanced graphs, approaches zero for highly unbalanced structures, and varies smoothly as imbalance increases, offering an intuitive quantitative notion of balance under uncertainty.
Starting from a balanced graph, we progressively introduce imbalance by adding edges that violate structural balance. As the fraction of added critical edges increases, the balance rate decreases monotonically. Separately, uniformly increasing the existence probabilities of all edges makes the realized graph denser, causing latent inconsistencies—whether introduced or original—to appear more frequently and jointly. For any fixed level of structural corruption, higher edge probabilities therefore lead to lower balance rates. Taken together, balance rate varies smoothly across both dimensions, attaining high values only when imbalance is structurally scarce and probabilistically suppressed, and dropping sharply when imbalance becomes both widespread and likely. This consistent behavior confirms that the balance rate provides a sensitive and intuitive measure of balance under uncertainty.
We additionally rigorously formalize this intuition in Lemma~\ref{p_eta_monotonicity} and Corollary~\ref{cor_monotonicity}. Lemma~\ref{p_eta_monotonicity} is proved in Appendix~\ref{app:missing_proofs}.

\begin{lemma}\label{p_eta_monotonicity}
	Let $G = (V, E, S)$ be a signed graph, $P, Q$ be two probability distributions on $E$ satisfying $p_e \leq q_e$ for any $e \in E$. Then 
	$$\mathbb{E}_{P}[\mathbb{I}(G \text{ is balanced})] \geq \mathbb{E}_{Q}[\mathbb{I}(G \text{ is balanced})].$$
\end{lemma}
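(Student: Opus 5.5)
The plan is a monotone coupling argument. Balance will be characterized through \Cref{thm:balance_detection}: a realization is balanced if and only if it contains no negative cycle. The key observation is that ``containing no negative cycle'' is a \emph{down-closed} (monotone decreasing) graph property. If $g' \subseteq g$ are two realizations on the same vertex set and $g$ has no negative cycle, then $g'$ has none either, because every cycle of $g'$ is also a cycle of $g$ with the same edge signs. Hence the indicator $\mathbb{I}(g \text{ is balanced})$ is non-increasing in the edge set, and it remains to show that raising edge probabilities stochastically enlarges the realized edge set.

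For the coupling, take independent random variables $U_e \sim \mathrm{Unif}[0,1]$, one per edge $e\in E$, on a common probability space. Define the realization $g_P$ by keeping $e$ iff $U_e \le p_e$, and $g_Q$ by keeping $e$ iff $U_e \le q_e$. Edges are independent and $\Pr(U_e\le p_e)=p_e$, so $g_P$ has exactly the law of a realization under $P$, and likewise $g_Q$ under $Q$. Since $p_e\le q_e$ for every $e$, we have $E(g_P)\subseteq E(g_Q)$ pointwise on the sample space. Down-closedness then gives $\mathbb{I}(g_P \text{ balanced}) \ge \mathbb{I}(g_Q \text{ balanced})$ almost surely, and taking expectations yields the claimed inequality.

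If one prefers to avoid the coupling, the same result follows by changing one edge probability at a time. Write the balance rate as a function of a single $p_e$:
\[
R_{bal} = p_e\,\Pr(\text{bal}\mid e\in g) + (1-p_e)\,\Pr(\text{bal}\mid e\notin g).
\]
This is affine in $p_e$ with slope $\Pr(\text{bal}\mid e\in g)-\Pr(\text{bal}\mid e\notin g)\le 0$, by down-closedness applied realization by realization to the other edges. Moving from $P$ to $Q$ one coordinate at a time then finishes the argument.

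The main obstacle is the connectivity clause in the paper's definition of a balanced graph. Connectivity is an \emph{increasing} property, so the event ``connected and free of negative cycles'' is not monotone. For example, a single positive edge is balanced in this sense exactly when the edge is present, so its balance rate grows with $p_e$. The lemma therefore has to be read with balance meaning ``no negative cycle'' (equivalently, every connected component admits a valid bipartition), which is the criterion of \Cref{thm:balance_detection} and the sense the CCI estimator actually tests. I would state this interpretation explicitly at the start of the proof, and then the coupling argument goes through without further difficulty.
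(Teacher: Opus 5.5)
Your proposal is correct and follows essentially the same route as the paper: both couple the two realizations through independent $U_e\sim\mathrm{Unif}[0,1]$ (keeping $e$ iff $U_e\le p_e$, resp.\ $U_e\le q_e$), use the negative-cycle characterization to show balance is monotone decreasing under edge inclusion, and take expectations. Your caveat about the connectivity clause is well founded, since the paper's proof silently adopts the ``no negative cycle'' reading of balance and the lemma fails under the literal connected definition (e.g.\ for a single positive edge), so stating that interpretation explicitly, as you plan, is the right fix.
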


\begin{corollary}\label{cor_monotonicity}
	Let $G_1 = (V, E_1, S_1, P_1), G_2 = (V, E_2, S_2, P_2)$ be USGs with $E_1 \subseteq E_2$, $S_2|_{E_1} = S_1$, $P_2|_{E_1} = P_1$. Then $R_{bal}(G_1) \geq R_{bal}(G_2)$.
\end{corollary}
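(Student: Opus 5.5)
We derive the corollary from Lemma~\ref{p_eta_monotonicity} by viewing $G_1$ as an uncertain signed graph on the larger edge set $E_2$, in which the extra edges have probability zero. Concretely, take the signed graph $H=(V,E_2,S_2)$ and define two edge-probability assignments on $E_2$:
\[
\tilde p_e=\begin{cases} p^{(1)}_e, & e\in E_1,\\ 0, & e\in E_2\setminus E_1,\end{cases}
\qquad
q_e=p^{(2)}_e .
\]
Here $p^{(1)}$ and $p^{(2)}$ denote $P_1$ and $P_2$. Since $P_2|_{E_1}=P_1$, we have $\tilde p_e=q_e$ on $E_1$. Off $E_1$ we have $\tilde p_e=0\le q_e$. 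So $\tilde p_e\le q_e$ for every $e\in E_2$, which is exactly the hypothesis of Lemma~\ref{p_eta_monotonicity}.

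The first step is to check that $\mathbb{E}_{\tilde P}[\mathbb{I}(H\text{ is balanced})]=R_{bal}(G_1)$. Under $\tilde P$, every edge of $E_2\setminus E_1$ is absent almost surely. Hence a realization of $H$ has positive probability only if its edge set $E_g$ lies in $E_1$. For such $E_g$, the factor $\prod_{e\in E_2\setminus E_1}(1-\tilde p_e)$ equals $1$, so $\Pr_{\tilde P}(g)$ coincides with the probability of the same edge set as a realization of $G_1$. The signs agree because $S_2|_{E_1}=S_1$.

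Both graphs also share the vertex set $V$. Therefore the balance indicator, which involves both connectivity on $V$ and the absence of negative cycles (Theorem~\ref{thm:balance_detection}), is the same whether $g$ is viewed as a realization of $G_1$ or of $H$. Summing over realizations gives the claimed equality. Likewise, $\mathbb{E}_{Q}[\mathbb{I}(H\text{ is balanced})]=R_{bal}(G_2)$ holds by definition.

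Applying Lemma~\ref{p_eta_monotonicity} with $P=\tilde P$ and this $Q$ then yields $R_{bal}(G_1)\ge R_{bal}(G_2)$.

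The only step that needs care is this identification of distributions: zero-probability edges must not change which realizations occur or how they are weighted. The shared vertex set $V$ is what keeps the connectivity part of the balance definition unchanged. Everything else is supplied directly by the lemma.
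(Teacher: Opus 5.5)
Your proposal is correct and is essentially the intended derivation: pad $P_1$ with zeros on $E_2\setminus E_1$, note that this leaves the law of $G_1$'s realized signed graph unchanged while giving $\tilde p_e\le q_e$, and invoke Lemma~\ref{p_eta_monotonicity}. One caution: the lemma, and hence the corollary, holds only for the negative-cycle notion of balance in Theorem~\ref{thm:balance_detection}, which is what the lemma's proof actually uses; under the connectivity clause of the Balanced Graph definition, adding an edge can create balance (on $V=\{a,b,c,d\}$, two disjoint negative edges $(a,b),(c,d)$ with probability $1$ give $R_{bal}=0$, while adding a negative edge $(b,c)$ with probability $1$ gives $R_{bal}=1$), so your remark that the shared $V$ preserves the connectivity part does not make that reading of the statement true.
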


For each dataset, we use the algorithm by~\cite{chen2024scalable} to extract a large Maximum Balanced Subgraph (MBS), summarized in \Cref{tab:MBS}. Then we vary two parameters: the fraction of injected critical edges $\eta$, ranging from $0.05\%$ to $5\%$, and a global probability scaling factor $p_{mul}$, ranging from 1.0 to 10, which uniformly multiplies the existence probabilities of all edges. For each $(\eta, p_{mul})$ pair, we compute the corresponding balance rate. The results are summarized in Figure~\ref{plot:p_eta}.

\begin{figure*}[t]
\centering
\includegraphics[width=0.36\textwidth]{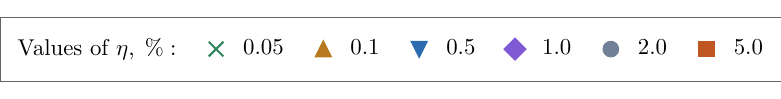}\\[0.3em]

\begin{tabular}{ccc}
\includegraphics[width=0.297\textwidth]{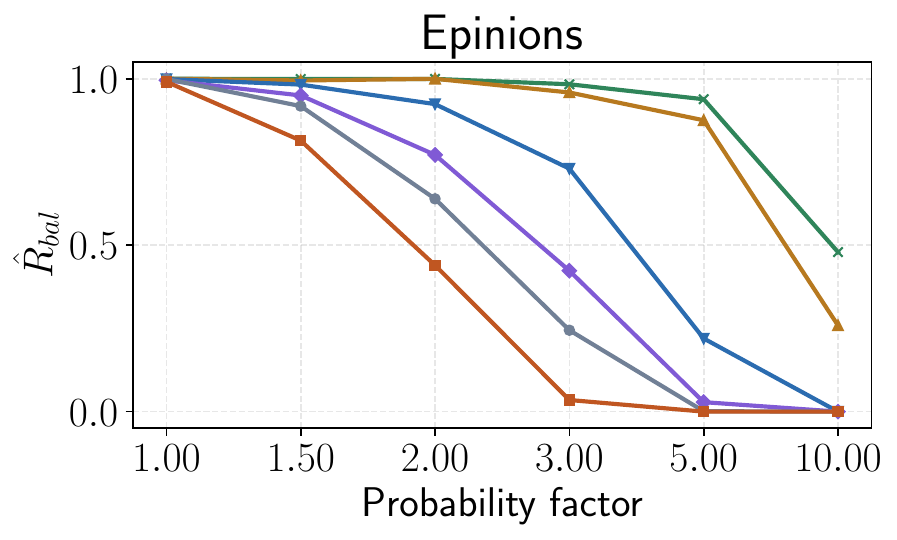} &
\includegraphics[width=0.297\textwidth]{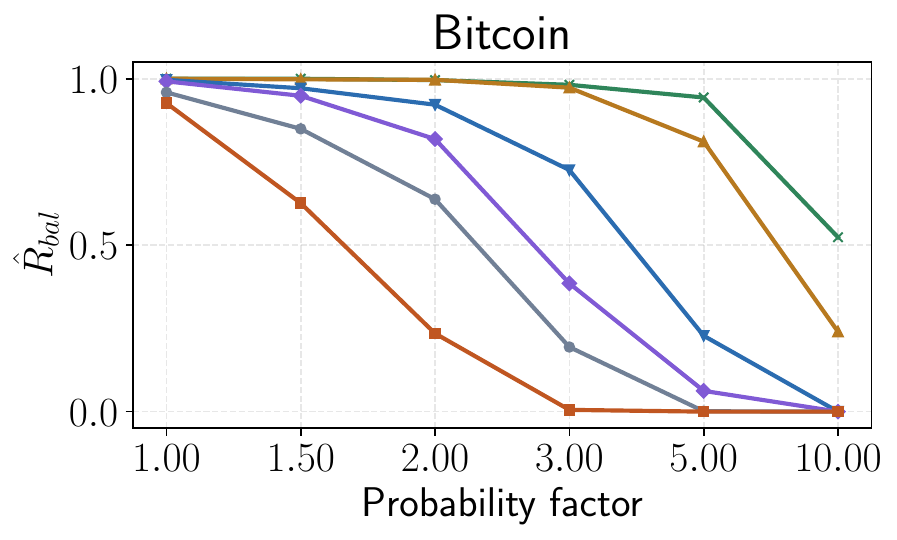} &
\includegraphics[width=0.297\textwidth]{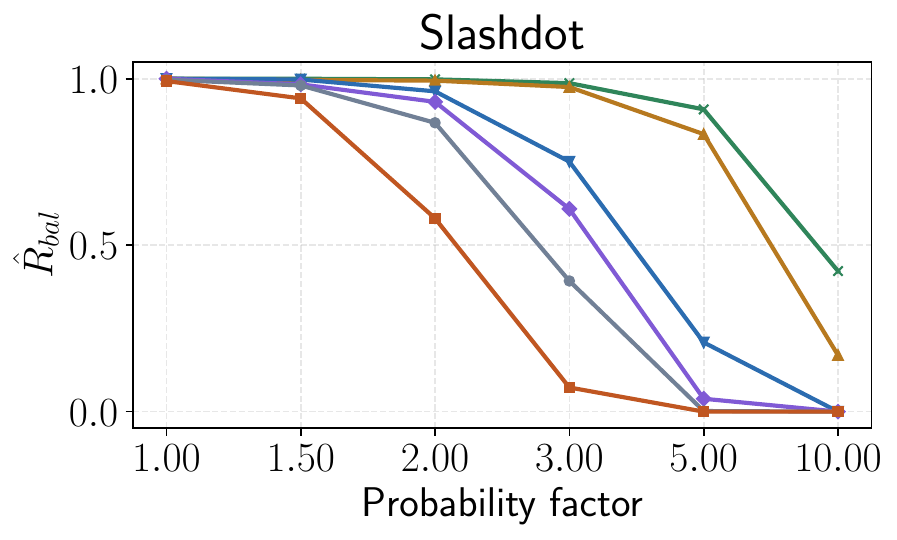} \\

\includegraphics[width=0.297\textwidth]{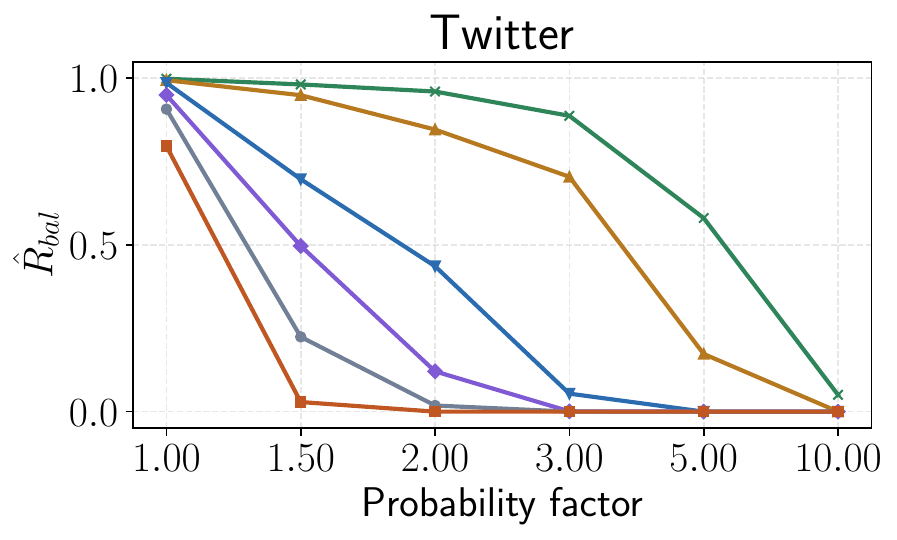} &
\includegraphics[width=0.297\textwidth]{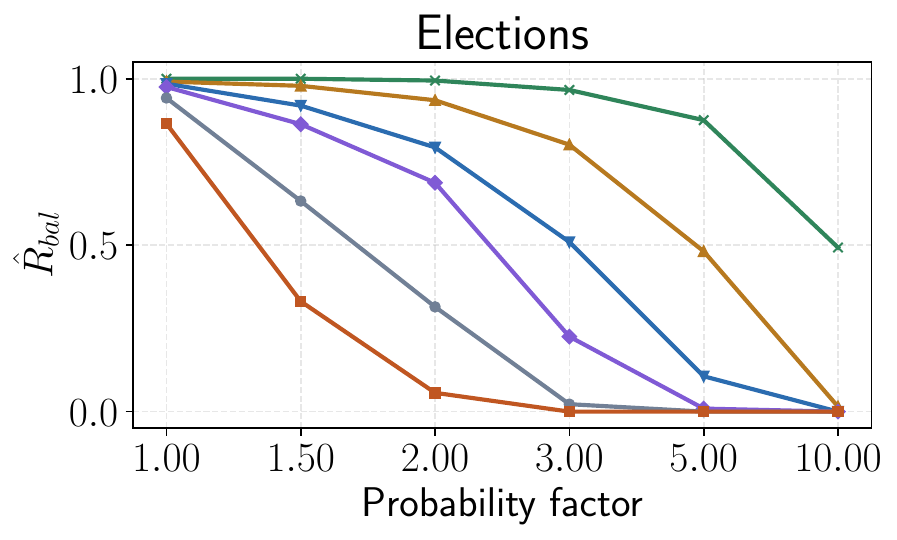} &
\includegraphics[width=0.297\textwidth]{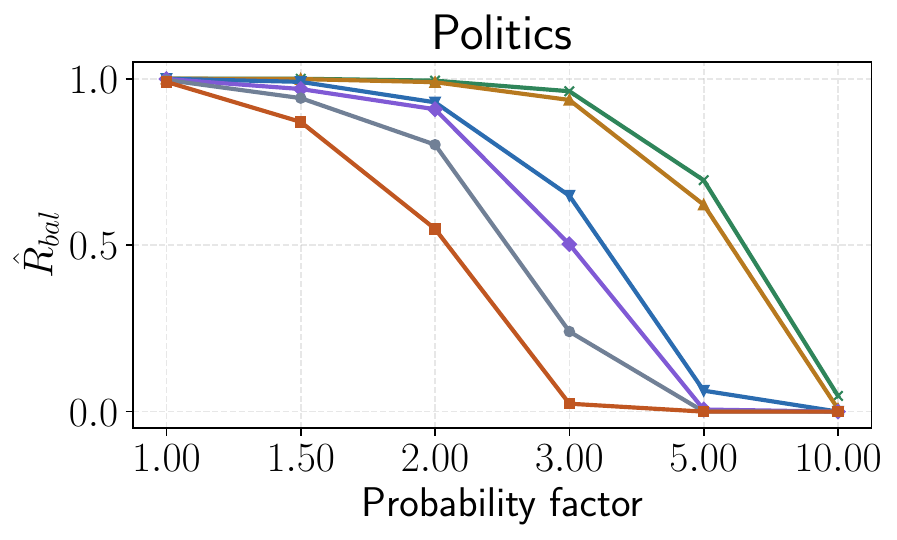}
\end{tabular}

\vspace{-0.3cm}
\caption{Comparison of balance rate for different $p_{mul}, \eta$.}
\label{plot:p_eta}
\end{figure*}

\subsection{Applicability}

We illustrate that the balance rate can be leveraged as a sensitivity measure for identifying structurally critical edges, enabling effective detection of edges whose presence disproportionately affects global balance. To demonstrate the metric's utility in isolating these edges, we designed a controlled "needle in a haystack" experiment involving ground-truth recovery.

\vspace{0.25cm}
\noindent \textit{Experimental Setup: The Hidden Conflict Scenario.}
For each real-world dataset described above, we simulated a scenario where a latent balanced structure is obscured by a small set of conflicting connections. The procedure was as follows:

\begin{enumerate}[leftmargin=*]
    \item \textbf{Ground Truth Extraction:} Same as above, we use the extracted balanced graphs, denoted as $G_{bal} = (V, E_{bal})$ (\Cref{tab:MBS}) as a clean, structurally balanced baseline.

    \item \textbf{Conflict Injection:} We injected a set of 5 \textit{critical edges}, denoted as $E_{crit}$, into $G_{bal}$. These edges were specifically selected to create polarity violations (unbalanced cycles), thereby reintroducing structural frustration into the network.
    
    \item \textbf{Candidate Set Generation:} To mask the critical edges, we combined them with 95 benign edges randomly sampled from the original graph structure. This formed a \textit{candidate set} $S$ of 100 edges, where $E_{crit} \subset S$ and $|S|=100$.
\end{enumerate}

The challenge is to recover the subset $E_{crit}$ from $S$ purely by optimizing the Balance Rate without prior knowledge.

\vspace{0.25cm}
\noindent \textit{Methodology: Greedy Optimization.}
We modeled the edge existence in the candidate set as a random signed graph process. To incorporate uncertainty, the existence probability $p_e$ for every edge $e \in S$ was drawn uniformly from the interval $[0.5, 1]$. We used 200 samples for each evaluation of the balance rate. We employed a greedy erasure strategy: at each step, we removed the edge $e \in S$ whose removal maximized the balance rate. We repeated this process for 5 iterations to identify the top-5 most critical edges.

\vspace{0.25cm}
\noindent \textit{Results: Precision in Conflict Detection.}
The experiment was conducted over $100$ independent trials (varying random edge probabilities and injected edge locations) for each dataset.
The results demonstrate that the Balance Rate is a highly discriminative filter for structural conflict. In \textbf{100\%} of the cases across all datasets, the 5 edges identified by the greedy erasure corresponded exactly to the injected set $E_{crit}$. This indicates that the Balance Rate can effectively pinpoint local structural anomalies, separating signal (critical conflicts) from noise (benign edges).

%% file: conclusion.tex
\section{CONCLUSION}\label{sec:conclusion}

In this work, we introduced the balance rate as a principal metric to quantify structural stability in uncertain signed graphs and established its theoretical and computational foundations. We proved that exact computation of the balance rate is NP-hard, clarifying the inherent difficulty of balance evaluation under uncertainty and motivating the need for approximation methods. To address this challenge, we developed a scalable estimation framework that combines structural graph decomposition with a Rao–Blackwellized spanning-tree estimator. This design reduces variance through principled conditioning and exploits the independence structure induced by biconnected components, yielding near-linear per-sample complexity while maintaining unbiasedness. We further provided asymptotically justified confidence intervals via the Delta method, enabling statistically sound inference for large-scale networks. Extensive experiments confirmed that our approach is both computationally efficient and structurally meaningful: the balance rate responds consistently to controlled perturbations and reveals edges that play a critical role in destabilizing network structure. Our framework opens several promising research directions, including developing efficient algorithms for detecting critical substructures.

%% file: appendix.tex
\appendix

\section{OMITTED PROOFS}\label{app:missing_proofs}

\subsection{Proof of Lemma~\ref{lem:bridge_decomp}}

Since $e$ is a bridge, it does not belong to any simple cycle of $G$. Hence,
for any realization $\omega$, the presence or absence of $e$ does not
affect whether $G(\omega)$ is balanced.

Moreover, $G(\omega)$ is balanced if and only if both subgraphs
$G_1(\omega)$ and $G_2(\omega)$ are balanced. Because the edge sets of
$G_1$ and $G_2$ are disjoint and edge realizations are independent,
these events are independent. Therefore,

\begin{align*}
R_{bal}(G)
&=
\mathbb{P}(G_1(\omega)\text{ balanced})\,
\mathbb{P}(G_2(\omega)\text{ balanced})
= \\ &=
R_{bal}(G_1)\,R_{bal}(G_2).
\end{align*}

\subsection{Proof of Lemma~\ref{lem:cutpoint_decomp}}

Every cycle of $G$ is contained entirely within exactly one of the
subgraphs $G_i$. Consequently, a realization $G(\omega)$ is balanced
if and only if each $G_i(\omega)$ is balanced.

Although the subgraphs share the articulation vertex $v$, their edge
sets are disjoint. Since edge realizations are independent, the balance
events of $G_1(\omega),\ldots,G_k(\omega)$ are independent, yielding
\[
R_{bal}(G)
=
\prod_{i=1}^k
\mathbb{P}\bigl(G_i(\omega)\text{ is balanced}\bigr)
=
\prod_{i=1}^k R_{bal}(G_i).
\]

\subsection{Proof of \Cref{thm:variance-reduction}}
    Since $\widehat R_{\mathrm{joint}}$ is an average of i.i.d.\ samples,
    \[
    \mathrm{Var}(\widehat R_{\mathrm{joint}})
    =
    \frac{1}{N}
    \left(
    \prod_{j=1}^m \mathbb{E}[R_{bal,j}^2]
    -
    \prod_{j=1}^m \mu_j^2
    \right).
    \]
    Define
    \[
    a_j := \mu_j^2,
    \qquad
    b_j := \mathbb{E}[R_{bal,j}^2]-\mu_j^2
    = \mathrm{Var}(R_{bal,j}) \ge 0.
    \]
    Then
    \begin{equation}
        \mathrm{Var}(\widehat R_{\mathrm{joint}})
        =
        \frac{1}{N}
        \left(
        \prod_{j=1}^m (a_j+b_j)
        -
        \prod_{j=1}^m a_j
        \right).
        \label{eq:var-joint}
    \end{equation}
    
    Next, since the $\bar R_{bal,j}$ are independent,
    \[
    \mathrm{Var}(\widehat R_{\mathrm{prod}})
    =
    \prod_{j=1}^m \mathbb{E}[\bar R_{bal,j}^2]
    -
    \prod_{j=1}^m \mu_j^2.
    \]
    A direct computation yields
    \[
    \mathbb{E}[\bar R_{\mathrm{bal},j}^2]
    =
    \mu_j^2 + \frac{b_j}{N},
    \]
    and therefore
    \begin{equation}
        \mathrm{Var}(\widehat R_{\mathrm{prod}})
        =
        \prod_{j=1}^m\left(a_j+\frac{b_j}{N}\right)
        -
        \prod_{j=1}^m a_j.
        \label{eq:var-prod}
    \end{equation}
    
    Expanding the products in
    \eqref{eq:var-joint} and \eqref{eq:var-prod} over subsets
    $S\subseteq\{1,\dots,m\}$ gives
    \[
    \mathrm{Var}(\widehat R_{\mathrm{joint}})
    =
    \sum_{S\neq\emptyset}
    \frac{1}{N}
    \left(\prod_{j\in S} b_j\right)
    \left(\prod_{j\notin S} a_j\right),
    \]
    while
    \[
    \mathrm{Var}(\widehat R_{\mathrm{prod}})
    =
    \sum_{S\neq\emptyset}
    \frac{1}{N^{|S|}}
    \left(\prod_{j\in S} b_j\right)
    \left(\prod_{j\notin S} a_j\right).
    \]
    
    Since $N^{-|S|}\le N^{-1}$ for all $|S|\ge 1$, the inequality follows termwise.
    Strict inequality holds unless $b_j=0$ for all $j$, i.e., unless each
    $\hat{R}_{bal,j}$ is deterministic.

\subsection{Proof of Lemma~\ref{p_eta_monotonicity}}

We employ a standard coupling argument to establish the stochastic dominance. Let $(\Omega, \mathcal{F}, \mathbb{P})$ be a probability space supporting a collection of independent random variables $\{U_e\}_{e \in E}$, where each $U_e$ is uniformly distributed on the interval $[0, 1]$.

We define the coupled random graphs $G_P$ and $G_Q$ on this common probability space via the indicator functions for their edge sets. For any edge $e \in E$, let:
\begin{align*}
	\mathbf{I}(e \in E(G_P)) &= \mathbf{I}_{\{U_e \le p_e\}}, \\
	\mathbf{I}(e \in E(G_Q)) &= \mathbf{I}_{\{U_e \le q_e\}}.
\end{align*}
By the hypothesis $p_e \le q_e$, the event $\{U_e \le p_e\}$ implies $\{U_e \le q_e\}$. Consequently, for every realization $\omega \in \Omega$, the edge set of $G_P$ is a subset of the edge set of $G_Q$:
\begin{equation*}
	E(G_P(\omega)) \subseteq E(G_Q(\omega))
\end{equation*}
Thus, $G_P(\omega)$ is a subgraph of $G_Q(\omega)$ almost surely.

A signed graph is structurally balanced if and only if it contains no cycles with an odd number of negative edges (negative cycles). Let $\mathcal{C}(H)$ denote the set of all cycles in a graph $H$. The condition for a graph $H$ to be balanced is:
\begin{equation*}
	\begin{aligned}
		\forall C \in \mathcal{C}(H), \quad \prod_{e \in C} \sigma(e) = +1.
	\end{aligned}
\end{equation*}
Since $G_P(\omega) \subseteq G_Q(\omega)$, any cycle present in $G_P(\omega)$ must also exist in $G_Q(\omega)$, i.e., $\mathcal{C}(G_P(\omega)) \subseteq \mathcal{C}(G_Q(\omega))$.

If $G_Q(\omega)$ is balanced, then all cycles in $\mathcal{C}(G_Q(\omega))$ are positive. Since $\mathcal{C}(G_P(\omega))$ is a subset of these cycles, all cycles in $G_P(\omega)$ must also be positive. Therefore, the property of being balanced $ \mathcal{B}$ is monotone decreasing with respect to edge inclusion:
\begin{equation*}
	G_Q(\omega) \in \mathcal{B} \implies G_P(\omega) \in \mathcal{B}.
\end{equation*}
Taking expectations over the probability space $\Omega$, we obtain:
\begin{equation*}
	\begin{aligned}
		\mathbb{P}(G_Q \in \mathcal{B}) &= \mathbb{P}(\{\omega : G_Q(\omega) \in \mathcal{B}\}) \leq \mathbb{P}(\{\omega : G_P(\omega) \in \mathcal{B}\}) = \\ &=\mathbb{P}(G_P \in \mathcal{B}).
	\end{aligned}
\end{equation*}

\begin{figure*}[t]
\centering

\includegraphics[width=0.33\textwidth]{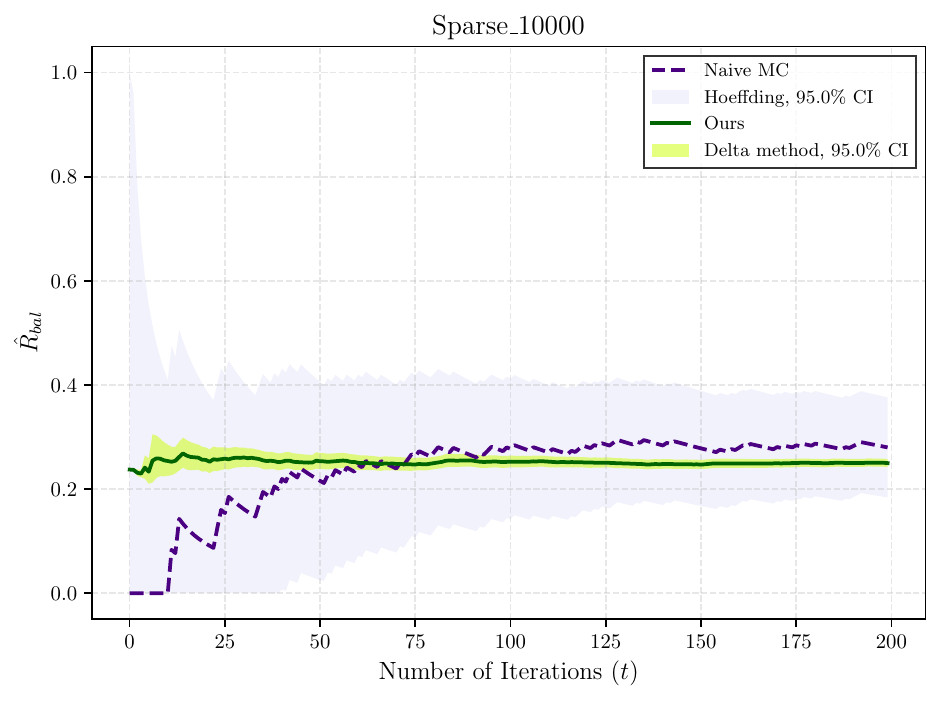}\hfill
\includegraphics[width=0.33\textwidth]{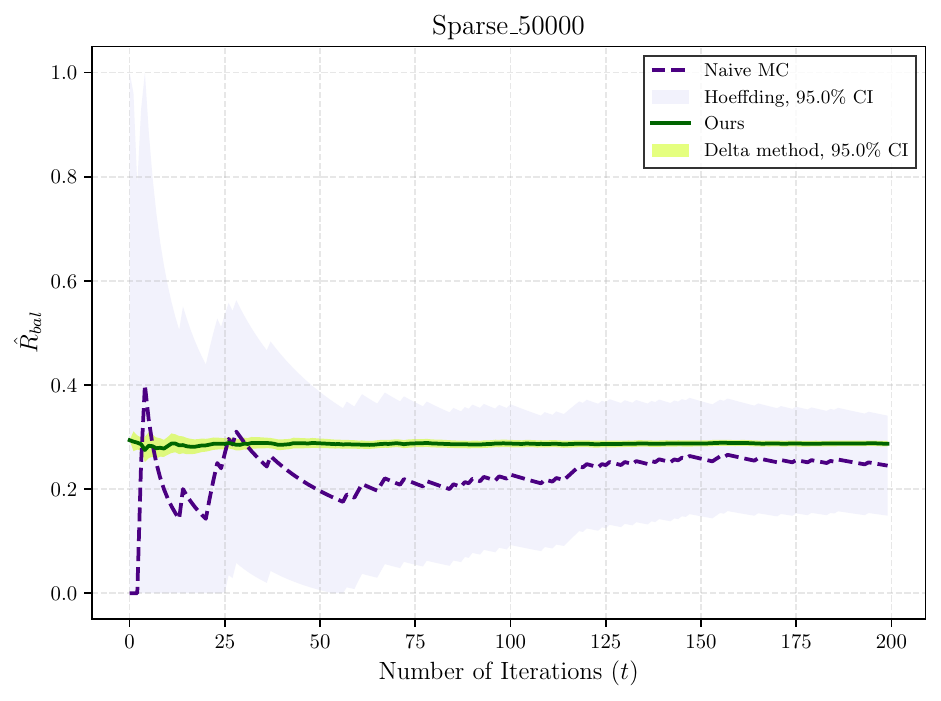}\hfill
\includegraphics[width=0.33\textwidth]{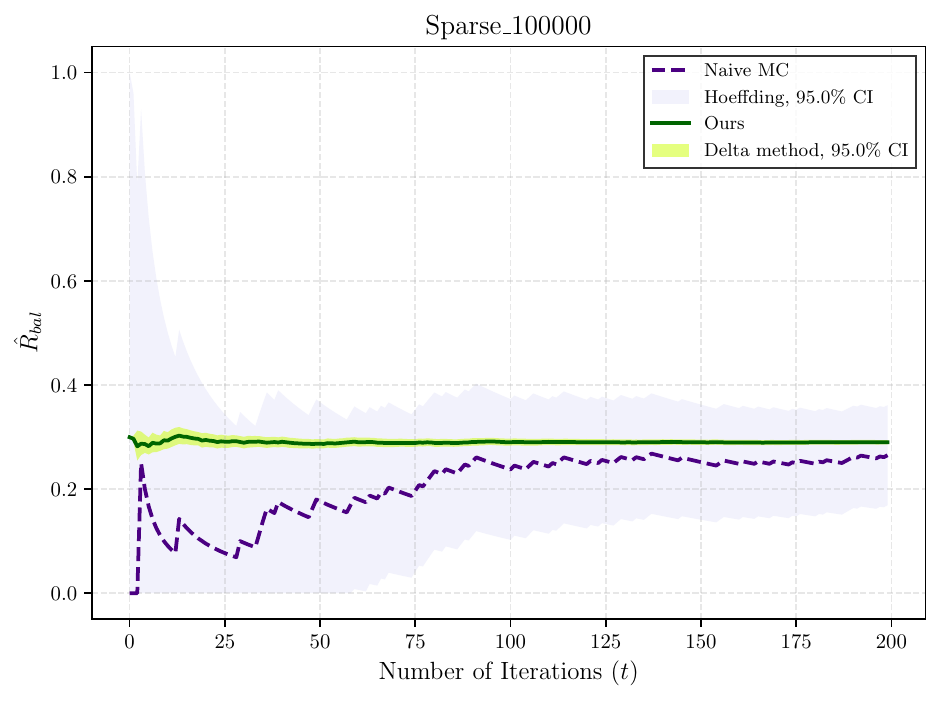}\hfill
\includegraphics[width=0.33\textwidth]{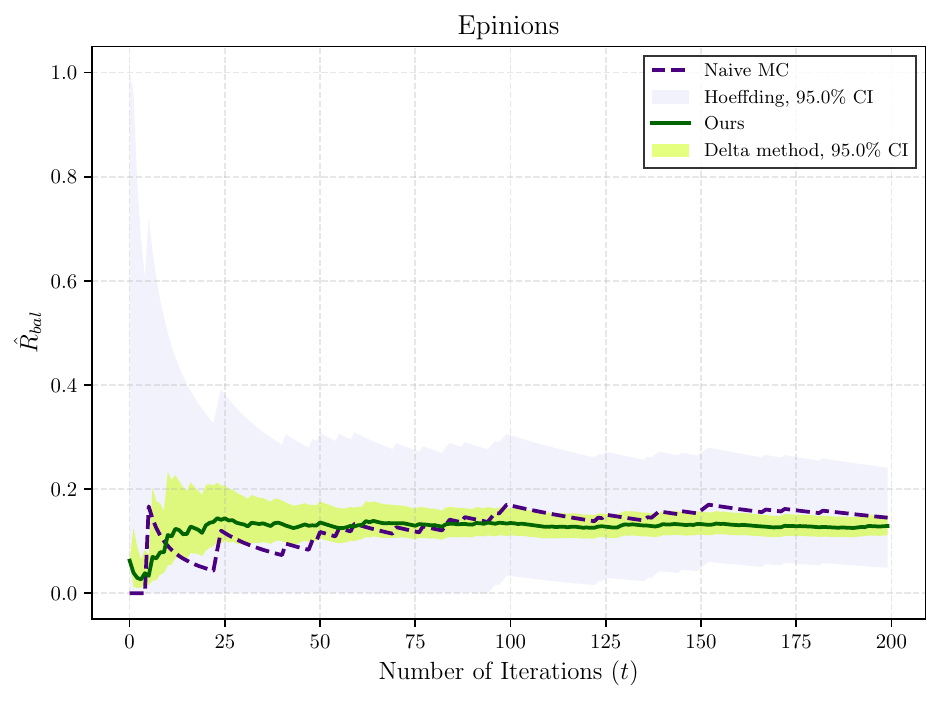}\hfill
\includegraphics[width=0.33\textwidth]{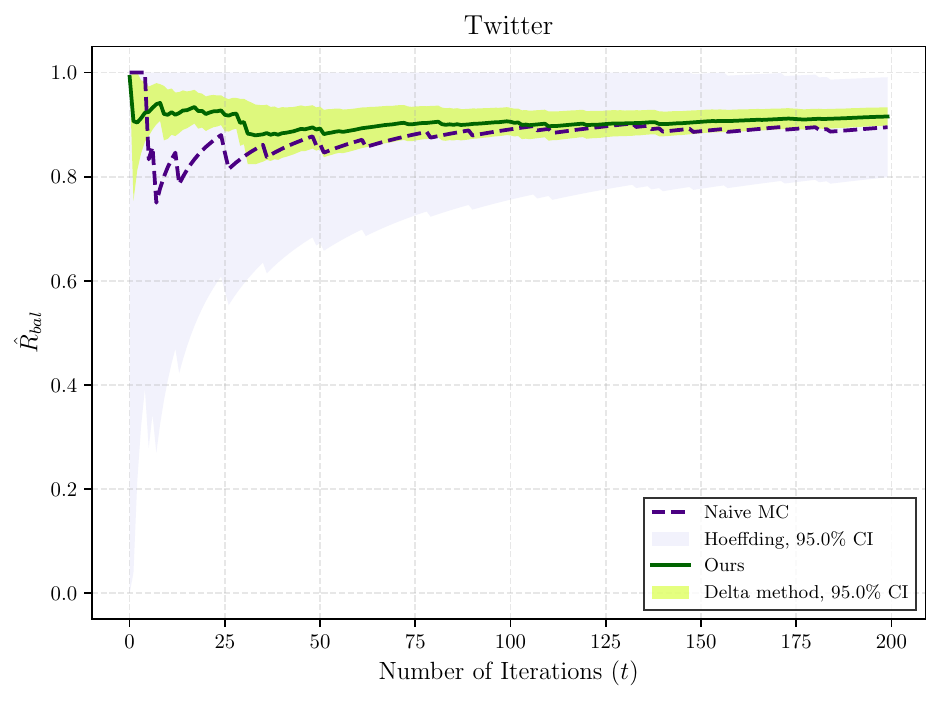}\hfill
\includegraphics[width=0.33\textwidth]{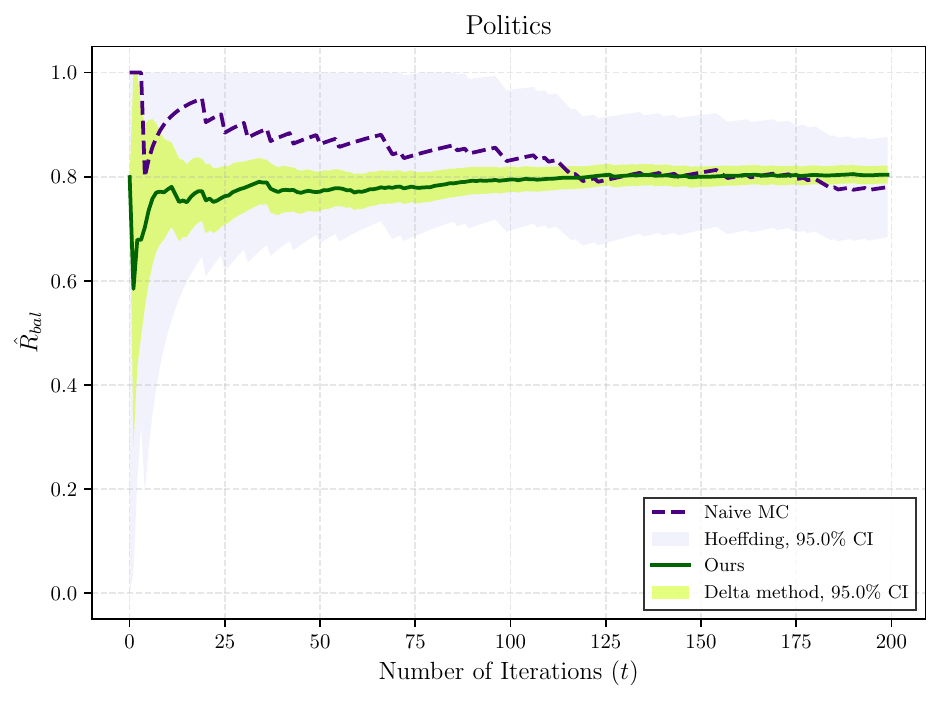}\hfill
\caption{Comparison of prefix confidence intervals between naive Monte Carlo (MC) sampling and Rao-Blackwellized (RB) sampling}\label{plot:conf_intervals}
\end{figure*}

\section{Balance Rate Computation}

Similar to certain graph balance detection, in USGs, balance rate is also related to negative cycles.

\begin{theorem}\label{thm:balance_rate_formula}
Let $\mathcal{NC}$ denotes the set of all negative cycles of a USG $G$, and $Q_i$ denotes the probability that $NC_i$ does not exist ($NC_i\in\mathcal{NC}$) respectively, i.e., 
\[
Q_i=1-\prod_{e\in NC_i}p_e.
\]
Then the balance rate of $G$ is the probability that none of the negative cycles in $\mathcal{NC}$ exist, i.e.,
\[
    R_{bal}(G)=\Pr(\bigcap_{NC_i\in\mathcal{NC}} Q_i).
\]
\end{theorem}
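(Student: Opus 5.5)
We read $Q_i$ in the intersection as the \emph{event} that $NC_i$ is not fully present in the realization; its probability is $1-\prod_{e\in NC_i}p_e$ by independence of edges. With this reading, the plan is to prove a pointwise identity of indicator functions on the set of realizations $\mathcal{G}$, and then take expectations using the definition of $R_{bal}$. Concretely, we show that for every realization $g=(V_g,E_g,S_g)$,
\[
\mathbb{I}(g\text{ is balanced})=\prod_{NC_i\in\mathcal{NC}}\mathbb{I}\bigl(NC_i\not\subseteq E_g\bigr).
\]
Summing against $\Pr(g)$ then gives $R_{bal}(G)=\Pr\bigl(\bigcap_i Q_i\bigr)$ directly.

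The key step is a correspondence between the negative cycles of $g$ and the elements of $\mathcal{NC}$. First, every cycle of $g$ is a cycle of $G$, since $E_g\subseteq E$ and signs are inherited from $G$. Its parity of negative edges is therefore the same in $g$ and in $G$, so the negative cycles of $g$ are exactly the $NC_i\in\mathcal{NC}$ with $NC_i\subseteq E_g$. Conversely, any $NC_i$ whose edges all lie in $E_g$ is a negative cycle of $g$. Hence "$g$ has no negative cycle" is equivalent to "$NC_i\not\subseteq E_g$ for every $i$", i.e.\ $g\in\bigcap_i Q_i$. Applying \Cref{thm:balance_detection} to $g$ turns the first condition into "$g$ is balanced", which is the pointwise identity.

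The main obstacle is reconciling this with the connectivity requirement in the definition of a balanced graph. A disconnected realization has no negative cycles but is not balanced under the strict definition. We would handle this exactly as \Cref{thm:balance_detection} and the decomposition Lemmas~\ref{lem:bridge_decomp} and~\ref{lem:cutpoint_decomp} implicitly do: interpret balance of a realization as every connected component being balanced, i.e.\ the frustration-index convention. Under that convention \Cref{thm:balance_detection} applies componentwise, and since every cycle lies inside a single component the argument goes through unchanged. We would state this convention explicitly at the start of the proof. Beyond that point, the only remaining care is to note that $\mathcal{NC}$ is finite, consisting of simple cycles of the finite graph $G$, so the intersection of events is well defined.
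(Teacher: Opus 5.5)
Your argument is correct, and it is the realization-by-realization application of Theorem~\ref{thm:balance_detection} that the paper leaves implicit; the appendix states this theorem without any proof. Your connectivity caveat is necessary, not cosmetic: under the literal definition, a tree $G$ with some edge of probability $p_e<1$ has right-hand side $1$ but balance rate at most $p_e$. The statement therefore holds only under the componentwise convention you adopt, which is also the one silently used by Lemma~\ref{lem:bridge_decomp}, Lemma~\ref{p_eta_monotonicity}, the NP-hardness reduction, and the CCI estimator.
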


Based on \Cref{thm:balance_rate_formula}, the global balance rate can be calculated separately, which inspires Lemmas \ref{lem:bridge_decomp} and \ref{lem:cutpoint_decomp}, both of which decompose a connected graph into two connected components. 

\section{CONFIDENCE INTERVALS COMPARISON}
Shown in \Cref{plot:conf_intervals}, on both synthetic and real-world datasets, our proposed algorithm has tighter confidence intervals than the naive MC sampling for one magnitude. This aligns with our theoretical analysis in \Cref{sec:delta_method_bounds}.

%% file: ref.bib
@article{cartwright1956,
	author    = {Cartwright, D. and Harary, F.},
	title     = {Structural balance: a generalization of Heider's theory},
	journal   = {Psychological Review},
	volume    = {63},
	pages     = {277--293},
	year      = {1956}
}

@article{zhang2017,
	author    = {Zhang, Y. and Tang, J. and Liu, H.},
	title     = {Mining Signed Networks in Social Media},
	journal   = {ACM SIGKDD Explorations},
	volume    = {19},
	number    = {2},
	pages     = {1--20},
	year      = {2017}
}

@article{doreian2019,
	author    = {Doreian, P.},
	title     = {Structural Balance and Signed Networks in International Relations},
	journal   = {Social Networks},
	volume    = {57},
	pages     = {89--100},
	year      = {2019}
}

@article{costanzo2016,
	author    = {Costanzo, M. et al.},
	title     = {A global genetic interaction network maps a wiring diagram of cellular function},
	journal   = {Science},
	volume    = {353},
	number    = {6306},
	pages     = {1423--1431},
	year      = {2016}
}

@article{liu2015,
	author    = {Liu, X. et al.},
	title     = {Analysis of Balanced Genetic Networks},
	journal   = {Bioinformatics},
	volume    = {31},
	number    = {3},
	pages     = {394--402},
	year      = {2015}
}

@article{saiz2018,
	author    = {Saiz, H. et al.},
	title     = {Structural Balance in Plant Communities},
	journal   = {Ecology Letters},
	volume    = {21},
	pages     = {199--208},
	year      = {2018}
}

@article{edge_deletion,
	author = {Yannakakis, Mihalis},
	title = {Edge-Deletion Problems},
	journal = {SIAM Journal on Computing},
	volume = {10},
	number = {2},
	pages = {297-309},
	year = {1981},
	doi = {10.1137/0210021},
	URL = { https://doi.org/10.1137/0210021}
}

@book{casella2002statistical,
	title={Statistical Inference},
	author={Casella, George and Berger, Roger L.},
	year={2002},
	publisher={Duxbury Press}
}

@article{galil1991data,
	title={Data Structures and Algorithms for Disjoint Set Union Problems},
	author={Galil, Zvi and Italiano, Giuseppe F.},
	journal={ACM Computing Surveys},
	volume={23},
	number={3},
	pages={319–344},
	year={1991},
	publisher={ACM}
}

@book{wasserman2004all,
	title={All of statistics: a concise course in statistical inference},
	author={Wasserman, Larry},
	year={2004},
	publisher={Springer Science \& Business Media}
}

@article{zhang2024finding,
  title={Finding Antagonistic Communities in Signed Uncertain Graphs},
  author={Zhang, Qiqi and Chu, Lingyang and Zhao, Zijin and Pei, Jian},
  journal={IEEE Transactions on Knowledge and Data Engineering},
  year={2024},
  publisher={IEEE}
}

@article{harary1967graph,
  title={Graph theory and theoretical physics},
  author={Harary, Frank and others},
  journal={(No Title)},
  year={1967}
}

@article{aggarwal1975reliability,
  title={Reliability evaluation a comparative study of different techniques},
  author={Aggarwal, KK and Misra, KB and Gupta, JS},
  journal={Microelectronics Reliability},
  volume={14},
  number={1},
  pages={49--56},
  year={1975},
  publisher={Elsevier}
}

@article{khan_conditional_2018,
	title = {Conditional {Reliability} in {Uncertain} {Graphs}},
	issn = {1041-4347},
	url = {http://ieeexplore.ieee.org/document/8318619/},
	doi = {10.1109/TKDE.2018.2816653},
	language = {en},
	urldate = {2022-11-30},
	journal = {IEEE Transactions on Knowledge and Data Engineering},
	author = {Khan, Arijit and Bonchi, Francesco and Gullo, Francesco and Nufer, Andreas},
	year = {2018},
	pages = {1--1},
}

@inproceedings{potamias_fast_2009,
	address = {Hong Kong, China},
	title = {Fast shortest path distance estimation in large networks},
	isbn = {978-1-60558-512-3},
	url = {http://portal.acm.org/citation.cfm?doid=1645953.1646063},
	doi = {10.1145/1645953.1646063},
	booktitle = {Proceeding of the 18th {ACM} conference on {Information} and knowledge management - {CIKM} '09},
	publisher = {ACM Press},
	author = {Potamias, Michalis and Bonchi, Francesco and Castillo, Carlos and Gionis, Aristides},
	year = {2009},
	pages = {867},
}

@article{jin_distance-constraint_2011,
	title = {Distance-constraint reachability computation in uncertain graphs},
	volume = {4},
	issn = {2150-8097},
	url = {https://dl.acm.org/doi/10.14778/2002938.2002941},
	doi = {10.14778/2002938.2002941},
	journal = {Proceedings of the VLDB Endowment},
	author = {Jin, Ruoming and Liu, Lin and Ding, Bolin and Wang, Haixun},
	month = jun,
	year = {2011},
	pages = {551--562},
}

@inproceedings{li_discovering_2017,
	address = {Chicago Illinois USA},
	title = {Discovering {Your} {Selling} {Points}: {Personalized} {Social} {Influential} {Tags} {Exploration}},
	isbn = {978-1-4503-4197-4},
	shorttitle = {Discovering {Your} {Selling} {Points}},
	doi = {10.1145/3035918.3035952},
	booktitle = {Proceedings of the 2017 {ACM} {International} {Conference} on {Management} of {Data}},
	publisher = {ACM},
	author = {Li, Yuchen and Fan, Ju and Zhang, Dongxiang and Tan, Kian-Lee},
	month = may,
	year = {2017},
	pages = {619--634},
}

@inproceedings{jin2010computing,
  title={Computing label-constraint reachability in graph databases},
  author={Jin, Ruoming and Hong, Hui and Wang, Haixun and Ruan, Ning and Xiang, Yang},
  booktitle={Proceedings of the 2010 ACM SIGMOD International Conference on Management of data},
  pages={123--134},
  year={2010}
}

@article{fishman_comparison_1986,
	title = {A {Comparison} of {Four} {Monte} {Carlo} {Methods} for {Estimating} the {Probability} of s-t {Connectedness}},
	volume = {35},
	issn = {0018-9529},
	doi = {10.1109/TR.1986.4335388},
	journal = {IEEE Transactions on Reliability},
	author = {Fishman, George S.},
	year = {1986},
	pages = {145--155},
}

@article{li_recursive_2016,
	title = {Recursive {Stratified} {Sampling}: {A} {New} {Framework} for {Query} {Evaluation} on {Uncertain} {Graphs}},
	volume = {28},
	issn = {1041-4347},
	shorttitle = {Recursive {Stratified} {Sampling}},
	url = {http://ieeexplore.ieee.org/document/7286806/},
	doi = {10.1109/TKDE.2015.2485212},
	journal = {IEEE Transactions on Knowledge and Data Engineering},
	author = {Li, Rong-Hua and Yu, Jeffrey Xu and Mao, Rui and Jin, Tan},
	month = feb,
	year = {2016},
	pages = {468--482},
}

@article{maniu_indexing_2017,
	title = {An {Indexing} {Framework} for {Queries} on {Probabilistic} {Graphs}},
	volume = {42},
	issn = {0362-5915, 1557-4644},
	doi = {10.1145/3044713},
	journal = {ACM Transactions on Database Systems},
	author = {Maniu, Silviu and Cheng, Reynold and Senellart, Pierre},
	month = jun,
	year = {2017},
	pages = {1--34},
}

@inproceedings{wang2024fast,
  title={Fast Query Answering by Labeling Index on Uncertain Graphs},
  author={Wang, Zeyu and Shi, Qihao and Chen, Jiawei and Wang, Can and Song, Mingli and Wang, Xinyu},
  booktitle={2024 IEEE 40th International Conference on Data Engineering (ICDE)},
  pages={4058--4071},
  year={2024},
  organization={IEEE}
}

@article{ball_computational_1986,
	title = {Computational {Complexity} of {Network} {Reliability} {Analysis}: {An} {Overview}},
	volume = {35},
	issn = {0018-9529},
	shorttitle = {Computational {Complexity} of {Network} {Reliability} {Analysis}},
	url = {http://ieeexplore.ieee.org/document/4335422/},
	doi = {10.1109/TR.1986.4335422},
	journal = {IEEE Transactions on Reliability},
	author = {Ball, Michael O.},
	year = {1986},
	pages = {230--239},
}

@article{valiant1979complexity,
  title={The complexity of enumeration and reliability problems},
  author={Valiant, Leslie G},
  journal={siam Journal on Computing},
  volume={8},
  number={3},
  pages={410--421},
  year={1979},
  publisher={SIAM}
}

@inproceedings{zou2013polynomial,
  title={Polynomial-time algorithm for finding densest subgraphs in uncertain graphs},
  author={Zou, Zhaonian},
  booktitle={Proceedings of MLG Workshop},
  year={2013}
}

@inproceedings{bonchi2014core,
  title={Core decomposition of uncertain graphs},
  author={Bonchi, Francesco and Gullo, Francesco and Kaltenbrunner, Andreas and Volkovich, Yana},
  booktitle={Proceedings of the 20th ACM SIGKDD international conference on Knowledge discovery and data mining},
  pages={1316--1325},
  year={2014}
}

@inproceedings{huang2016truss,
  title={Truss decomposition of probabilistic graphs: Semantics and algorithms},
  author={Huang, Xin and Lu, Wei and Lakshmanan, Laks VS},
  booktitle={Proceedings of the 2016 International Conference on Management of Data},
  pages={77--90},
  year={2016}
}

@article{zou2017truss,
  title={Truss decomposition of uncertain graphs},
  author={Zou, Zhaonian and Zhu, Rong},
  journal={Knowledge and Information Systems},
  volume={50},
  pages={197--230},
  year={2017},
  publisher={Springer}
}

@article{yuan2012efficient,
  title={Efficient Subgraph Similarity Search on Large Probabilistic Graph Databases},
  author={Yuan, Ye and Wang, Guoren and Chen, Lei and Wang, Haixun},
  journal={Proceedings of the VLDB Endowment},
  year={2012}
}

@inproceedings{papapetrou2011efficient,
  title={Efficient discovery of frequent subgraph patterns in uncertain graph databases},
  author={Papapetrou, Odysseas and Ioannou, Ekaterini and Skoutas, Dimitrios},
  booktitle={Proceedings of the 14th International Conference on Extending Database Technology},
  pages={355--366},
  year={2011}
}

@article{wen2020computing,
  title={Computing k-cores in large uncertain graphs: An index-based optimal approach},
  author={Wen, Dong and Yang, Bohua and Qin, Lu and Zhang, Ying and Chang, Lijun and Li, Rong-Hua},
  journal={IEEE Transactions on Knowledge and Data Engineering},
  volume={34},
  number={7},
  pages={3126--3138},
  year={2020},
  publisher={IEEE}
}

@inproceedings{yang2019index,
  title={Index-based optimal algorithm for computing k-cores in large uncertain graphs},
  author={Yang, Bohua and Wen, Dong and Qin, Lu and Zhang, Ying and Chang, Lijun and Li, Rong-Hua},
  booktitle={2019 IEEE 35th International Conference on Data Engineering (ICDE)},
  pages={64--75},
  year={2019},
  organization={IEEE}
}

@article{harary1953notion,
  title={On the notion of balance of a signed graph.},
  author={Harary, Frank},
  journal={Michigan Mathematical Journal},
  volume={2},
  number={2},
  pages={143--146},
  year={1953},
  publisher={University of Michigan, Department of Mathematics}
}

@article{harary1980simple,
  title={A simple algorithm to detect balance in signed graphs},
  author={Harary, Frank and Kabell, Jerald A},
  journal={Mathematical Social Sciences},
  volume={1},
  number={1},
  pages={131--136},
  year={1980},
  publisher={Elsevier}
}

@article{aref2020modeling,
  title={A modeling and computational study of the frustration index in signed networks},
  author={Aref, Samin and Mason, Andrew J and Wilson, Mark C},
  journal={Networks},
  volume={75},
  number={1},
  pages={95--110},
  year={2020},
  publisher={Wiley Online Library}
}

@inproceedings{chen2024scalable,
  title={Scalable Algorithm for Finding Balanced Subgraphs with Tolerance in Signed Networks},
  author={Chen, Jingbang and Mang, Qiuyang and Zhou, Hangrui and Peng, Richard and Gao, Yu and Ma, Chenhao},
  booktitle={Proceedings of the 30th ACM SIGKDD Conference on Knowledge Discovery and Data Mining},
  pages={278--287},
  year={2024}
}

@article{fortunato2010community,
  title={Community detection in graphs},
  author={Fortunato, Santo},
  journal={Physics reports},
  volume={486},
  number={3-5},
  pages={75--174},
  year={2010},
  publisher={Elsevier}
}

@article{harary1959measurement,
  title={On the measurement of structural balance},
  author={Harary, Frank},
  journal={Behavioral Science},
  volume={4},
  number={4},
  pages={316--323},
  year={1959},
  publisher={Wiley Online Library}
}

@article{figueiredo2014maximum,
  title={The maximum balanced subgraph of a signed graph: Applications and solution approaches},
  author={Figueiredo, Rosa and Frota, Yuri},
  journal={European Journal of Operational Research},
  volume={236},
  number={2},
  pages={473--487},
  year={2014},
  publisher={Elsevier}
}

@inproceedings{ordozgoiti2020finding,
  title={Finding large balanced subgraphs in signed networks},
  author={Ordozgoiti, Bruno and Matakos, Antonis and Gionis, Aristides},
  booktitle={Proceedings of The Web Conference 2020},
  pages={1378--1388},
  year={2020}
}

@inproceedings{mandaglio2020and,
  title={In and out: optimizing overall interaction in probabilistic graphs under clustering constraints},
  author={Mandaglio, Domenico and Tagarelli, Andrea and Gullo, Francesco},
  booktitle={Proceedings of the 26th ACM SIGKDD International Conference on Knowledge Discovery \& Data Mining},
  pages={1371--1381},
  year={2020}
}

@inproceedings{zha2001bipartite,
  title={Bipartite graph partitioning and data clustering},
  author={Zha, Hongyuan and He, Xiaofeng and Ding, Chris and Simon, Horst and Gu, Ming},
  booktitle={Proceedings of the tenth international conference on Information and knowledge management},
  pages={25--32},
  year={2001}
}

@article{peter2019encyclopedia,
  title={Encyclopedia of bioinformatics and computational biology},
  author={Peter, Swathik Clarancia and Dhanjal, Jaspreet Kaur and Malik, Vidhi and Radhakrishnan, Navaneethan and Jayakanthan, Mannu and Sundar, Durai and Sundar, Durai and Jayakanthan, Mannu},
  journal={Editor S. Ranganathan, M. Grib-skov, K. Nakai, and C. Sch{\"o}nbach},
  pages={661--676},
  year={2019}
}

@inproceedings{adamic2005political,
  title={The political blogosphere and the 2004 US election: divided they blog},
  author={Adamic, Lada A and Glance, Natalie},
  booktitle={Proceedings of the 3rd international workshop on Link discovery},
  pages={36--43},
  year={2005}
}

@article{yang2007community,
  title={Community mining from signed social networks},
  author={Yang, Bo and Cheung, William and Liu, Jiming},
  journal={IEEE transactions on knowledge and data engineering},
  volume={19},
  number={10},
  pages={1333--1348},
  year={2007},
  publisher={IEEE}
}

@article{doreian1996partitioning,
  title={A partitioning approach to structural balance},
  author={Doreian, Patrick and Mrvar, Andrej},
  journal={Social networks},
  volume={18},
  number={2},
  pages={149--168},
  year={1996},
  publisher={Elsevier}
}

@book{van2000asymptotic,
	title={Asymptotic Statistics},
	author={Van der Vaart, Aad W.},
	year={2000},
	publisher={Cambridge University Press},
	series={Cambridge Series in Statistical and Probabilistic Mathematics}
}

@article{giscard2017evaluating,
  title={Evaluating balance on social networks from their simple cycles},
  author={Giscard, Pierre-Louis and Rochet, Paul and Wilson, Richard C},
  journal={Journal of Complex Networks},
  volume={5},
  number={5},
  pages={750--775},
  year={2017},
  publisher={Oxford University Press}
}

@article{facchetti2011computing,
  title={Computing global structural balance in large-scale signed social networks},
  author={Facchetti, Giuseppe and Iacono, Giovanni and Altafini, Claudio},
  journal={Proceedings of the National Academy of Sciences},
  volume={108},
  number={52},
  pages={20953--20958},
  year={2011},
  publisher={National Academy of Sciences}
}

@article{xiong2020logic,
  title={On the logic of balance in social networks},
  author={Xiong, Zuojun and {\AA}gotnes, Thomas},
  journal={Journal of Logic, Language and Information},
  volume={29},
  number={1},
  pages={53--75},
  year={2020},
  publisher={Springer}
}

@article{aref2020detecting,
  title={Detecting coalitions by optimally partitioning signed networks of political collaboration},
  author={Aref, Samin and Neal, Zachary},
  journal={Scientific reports},
  volume={10},
  number={1},
  pages={1506},
  year={2020},
  publisher={Nature Publishing Group UK London}
}

@inproceedings{huang2022pole,
  title={Pole: Polarized embedding for signed networks},
  author={Huang, Zexi and Silva, Arlei and Singh, Ambuj},
  booktitle={Proceedings of the Fifteenth ACM International Conference on Web Search and Data Mining},
  pages={390--400},
  year={2022}
}

@article{capozzi2023analyzing,
  title={Analyzing and visualizing polarization and balance with signed networks: the US congress case study},
  author={Capozzi, Arthur and Semeraro, Alfonso and Ruffo, Giancarlo},
  journal={Journal of Complex Networks},
  volume={11},
  number={4},
  pages={cnad027},
  year={2023},
  publisher={Oxford University Press}
}

@article{liu2018multi,
  title={Multi-domain networks association for biological data using block signed graph clustering},
  author={Liu, Ye and Ng, Michael K and Wu, Stephen},
  journal={IEEE/ACM Transactions on Computational Biology and Bioinformatics},
  volume={17},
  number={2},
  pages={435--448},
  year={2018},
  publisher={IEEE}
}

@article{yang2020graph,
  title={Graph-based prediction of protein-protein interactions with attributed signed graph embedding},
  author={Yang, Fang and Fan, Kunjie and Song, Dandan and Lin, Huakang},
  journal={BMC bioinformatics},
  volume={21},
  number={1},
  pages={323},
  year={2020},
  publisher={Springer}
}

@inproceedings{zhang2012signed,
  title={Signed network propagation for detecting differential gene expressions and DNA copy number variations},
  author={Zhang, Wei and Johnson, Nicholas and Wu, Baolin and Kuang, Rui},
  booktitle={Proceedings of the ACM conference on bioinformatics, computational biology and biomedicine},
  pages={337--344},
  year={2012}
}

@article{ehsani2020structure,
  title={The structure of stock markets as signed networks},
  author={Ehsani, Maryam},
  journal={Journal of Industrial and Systems Engineering},
  year={2020}
}

@article{hashemi2022review,
  title={The review of ecological network indicators in graph theory context: 2014--2021},
  author={Hashemi, Rastegar and Darabi, Hassan},
  journal={International Journal of Environmental Research},
  volume={16},
  number={2},
  pages={24},
  year={2022},
  publisher={Springer}
}

@article{xiao2017mapping,
  title={Mapping the ecological networks of microbial communities},
  author={Xiao, Yandong and Angulo, Marco Tulio and Friedman, Jonathan and Waldor, Matthew K and Weiss, Scott T and Liu, Yang-Yu},
  journal={Nature communications},
  volume={8},
  number={1},
  pages={2042},
  year={2017},
  publisher={Nature Publishing Group UK London}
}

@article{boginski2005statistical,
  title={Statistical analysis of financial networks},
  author={Boginski, Vladimir and Butenko, Sergiy and Pardalos, Panos M},
  journal={Computational statistics \& data analysis},
  volume={48},
  number={2},
  pages={431--443},
  year={2005},
  publisher={Elsevier}
}

@incollection{abelson2017symbolic,
  title={Symbolic psycho-logic: A model of attitudinal cognition},
  author={Abelson, Robert P and Rosenberg, Milton J},
  booktitle={Attitude change},
  pages={86--115},
  year={2017},
  publisher={Routledge}
}

@article{hartmanis1982computers,
  title={Computers and intractability: a guide to the theory of np-completeness (michael r. garey and david s. johnson)},
  author={Hartmanis, Juris},
  journal={Siam Review},
  volume={24},
  number={1},
  pages={90},
  year={1982},
  publisher={Society for Industrial and Applied Mathematics}
}

@article{tarjan1972depth,
  title={Depth-first search and linear graph algorithms},
  author={Tarjan, Robert},
  journal={SIAM journal on computing},
  volume={1},
  number={2},
  pages={146--160},
  year={1972},
  publisher={SIAM}
}
